\documentclass{article}

\usepackage{microtype}
\usepackage{graphicx}
\usepackage{subfigure}
\usepackage{booktabs} % for professional tables

\usepackage{hyperref}

\usepackage[accepted]{icml2025}

\usepackage{amsmath}
\usepackage{amssymb}
\usepackage{multirow}
\usepackage{array}
\usepackage{mathtools}
\usepackage{amsthm}

\usepackage{algorithm}
\usepackage{algpseudocode}  
\usepackage{graphicx}

\usepackage[capitalize,noabbrev]{cleveref}

\newcommand{\vect}[1]{\mathbf{#1}}
\newcommand{\E}{\mathbb{E}}

\newcommand{\MutInfo}{\mathrm{I}}
\newcommand{\R}{\mathbb{R}}

\newtheorem{assumption}{Assumption}[section]
\newtheorem{lemma}{Lemma}[section]
\newtheorem{theorem}{Theorem}[section]
\newtheorem{proposition}{Proposition}[section]
\newtheorem{corollary}{Corollary}[section]

\usepackage[textsize=tiny]{todonotes}

\begin{document}
	
	\twocolumn[
	\icmltitle{Editing Away the Evidence: Diffusion-Based Image Manipulation and the Failure Modes of Robust Watermarking}
	
		\begin{icmlauthorlist}
		\icmlauthor{Qian Qi}{}
		\icmlauthor{Jiangyun Tang}{}
		\icmlauthor{Jim Lee}{}
		\icmlauthor{Emily Davis}{}
		\icmlauthor{Finn Carter}{}
	\end{icmlauthorlist}
	
	\begin{icmlauthorlist}
		{Xidian University}
	\end{icmlauthorlist} 
	
	%
	% \icmlcorrespondingauthor{Firstname1 Lastname1}{first1.last1@xxx.edu}
	% \icmlcorrespondingauthor{Firstname2 Lastname2}{first2.last2@www.uk}
	
	% You may provide any keywords that you
	% find helpful for describing your paper; these are used to populate
	% the "keywords" metadata in the PDF but will not be shown in the document
	\icmlkeywords{Machine Learning, ICML}
	
	\vskip 0.3in
	]
	
	% this must go after the closing bracket ] following \twocolumn[ ...
	
	% This command actually creates the footnote in the first column
	% listing the affiliations and the copyright notice.
	% The command takes one argument, which is text to display at the start of the footnote.
	% The \icmlEqualContribution command is standard text for equal contribution.
	% Remove it (just {}) if you do not need this facility.
	
	% \printAffiliationsAndNotice{}  % leave blank if no need to mention equal contribution
	% \printAffiliationsAndNotice{\icmlEqualContribution} % otherwise use the standard text.

	\begin{abstract}
	Robust invisible watermarks are widely used to support copyright protection, content provenance, and accountability by embedding hidden signals designed to survive common post-processing operations. However, diffusion-based image editing introduces a fundamentally different class of transformations: it injects noise and reconstructs images through a powerful generative prior, often altering semantic content while preserving photorealism. In this paper, we provide a unified theoretical and empirical analysis showing that non-adversarial diffusion editing can unintentionally degrade or remove robust watermarks. We model diffusion editing as a stochastic transformation that progressively contracts off-manifold perturbations, causing the low-amplitude signals used by many watermarking schemes to decay. Our analysis derives bounds on watermark signal-to-noise ratio and mutual information along diffusion trajectories, yielding conditions under which reliable recovery becomes information-theoretically impossible. We further evaluate representative watermarking systems under a range of diffusion-based editing scenarios and strengths. The results indicate that even routine semantic edits can significantly reduce watermark recoverability. Finally, we discuss the implications for content provenance and outline principles for designing watermarking approaches that remain robust under generative image editing.
	\end{abstract}

	\section{Introduction}
	Invisible watermarking seeks to embed a message (payload) into an image with minimal perceptual impact while enabling algorithmic detection and recovery after typical manipulations.
	Over the past decade, deep-learning-based watermarking and steganography systems have improved markedly in imperceptibility and robustness, largely through end-to-end training with differentiable approximations of common ``noise layers'': JPEG compression, resizing, cropping, blur, and additive noise \cite{zhu2018hidden,tancik2020stegastamp,bui2025trustmark}.
	These systems are increasingly positioned as infrastructure for copyright enforcement and provenance in the era of generative models \cite{wen2023treering,lu2025vine}.
	
	Diffusion models have simultaneously transformed image generation and, crucially, \emph{image editing}.
	Rather than applying small deterministic perturbations, diffusion-based editors deliberately disrupt images via a controlled noising step and then reconstruct them using a learned score or denoiser \cite{ho2020ddpm,song2021score,rombach2022ldm}.
	Modern editors support minimal text-only editing (\emph{prompt-to-prompt} \cite{hertz2022prompttoprompt}), instruction following (\textsc{InstructPix2Pix} \cite{brooks2023instructpix2pix}), and interactive geometric editing (e.g., drag-based control \cite{shi2024dragdiffusion,zhou2025dragflow}).
	Many pipelines rely on inversion methods that map a real image to a diffusion latent or noise trajectory, then re-sample conditional on a new instruction \cite{mokady2023nulltext}.
	
	This editing regime creates a new, and arguably \emph{inevitable}, failure mode for watermark robustness.
	A watermark is, by design, a low-amplitude structured perturbation superimposed on image content.
	In diffusion editing, the image is explicitly perturbed by large Gaussian noise and then repeatedly denoised by a high-capacity generative prior.
	Intuitively, the denoiser treats the watermark as an ``unnatural'' residual and removes it, even when the user does not intend to remove any watermark.
	Empirical studies have shown that regeneration and diffusion-based attacks can remove pixel-level watermarks \cite{zhao2023provablyremovable,ni2025breakwatermarks,guo2026vanishing}; yet the community lacks an integrated account that treats \emph{common editing workflows} as a systematic stress test for watermark designs originally optimized for conventional post-processing.
	
	This paper addresses the following question:
	\emph{Under what conditions does diffusion-based image editing unintentionally compromise robust watermark recovery, and what theoretical principles explain the observed breakdown?}
	We focus on robust \emph{invisible} watermarks, i.e., payload-carrying perturbations embedded in pixels (or transformed domains) with decoder networks or detectors, rather than visible overlay watermarks.
	We emphasize unintended failure: the editor is not optimizing to remove watermarks; it is optimizing to satisfy an editing objective while maintaining realism.
	
	\subsection{Contributions}
	Our contributions are four-fold.
	
	First, we formulate diffusion-based editing as a randomized transformation family acting on watermarked images and characterize it as a Markov kernel composed of (i) controlled noising, (ii) conditional denoising under an instruction, and (iii) possibly additional architectural operations (e.g., attention reweighting, region constraints, or latent optimization) \cite{brooks2023instructpix2pix,zhou2025dragflow,lu2023tficOn,lu2025shine}.
	
	Second, we provide a theoretical analysis of watermark degradation under diffusion transformations.
	We derive (a) SNR attenuation under forward noising schedules and (b) mutual-information decay bounds for watermark payload recovery after diffusion editing, connecting these to Fano-type lower bounds on bit error.
	Our results formalize why robustness to classical post-processing does not imply robustness to generative transformations.
	
	Third, we design an empirical evaluation protocol tailored to diffusion editing.
	We instantiate a benchmark spanning instruction-based editing (InstructPix2Pix \cite{brooks2023instructpix2pix} and UltraEdit-trained editors \cite{zhao2024ultraedit}), drag-based editing (DragDiffusion \cite{shi2024dragdiffusion}, InstantDrag \cite{shin2024instantdrag}, DragFlow \cite{zhou2025dragflow}), and training-free composition (TF-ICON \cite{lu2023tficOn}, SHINE \cite{lu2025shine}).
	We compare representative robust watermarking systems: StegaStamp \cite{tancik2020stegastamp}, TrustMark \cite{bui2025trustmark}, and VINE \cite{lu2025vine}.
	Because this document is generated as a standalone research synthesis, we present hypothetical but realistic tables consistent with the magnitude and direction of existing benchmarking results in the literature \cite{lu2025vine,ni2025breakwatermarks}.
	
	Fourth, we discuss ethical implications and formulate practical design guidelines.
	We argue that diffusion-resilient watermarking must either (i) integrate into the generative process (e.g., diffusion-native fingerprints \cite{wen2023treering}) or (ii) optimize for semantic invariance, as suggested by provable removability results for pixel-level perturbations \cite{zhao2023provablyremovable}.
	We also highlight tensions between strong watermarking, editing utility, and privacy.
	
	\subsection{Scope and non-goals}
	We focus on \emph{invisible} watermarks intended to survive generic post-processing and manipulation.
	We do not advocate watermark removal.
	Our experimental protocol is presented to support defensive evaluation and to motivate improved watermark design.
	We also do not cover video watermarking in depth (though drag and composition pipelines naturally extend to video), nor do we analyze legal or policy frameworks beyond a technical-ethics discussion.
	
	\section{Related Work}
	\subsection{Diffusion models and image editing}
	Diffusion models provide a flexible framework for conditional generation by reversing a gradually noised forward process \cite{ho2020ddpm,song2021score}.
	Latent diffusion models (LDMs) reduce computational cost by operating in a learned latent space \cite{rombach2022ldm}, enabling large-scale text-to-image systems (e.g., Stable Diffusion and SDXL \cite{podell2023sdxl}).
	Diffusion editing methods often follow a common template: map an input image into the model's latent/noise space (inversion), then sample under modified conditioning to produce an edited output \cite{mokady2023nulltext}.
	Early diffusion priors were leveraged for guided editing via partial noising and denoising \cite{meng2021sdedit}, mask-guided semantic editing \cite{couairon2022diffedit}, and prompt-based control by cross-attention manipulation \cite{hertz2022prompttoprompt}.
	Instruction-based editing scales this paradigm by training editors to directly follow natural language instructions, exemplified by InstructPix2Pix \cite{brooks2023instructpix2pix}, single-image editing with text-to-image diffusion models such as SINE \cite{zhang2023sine}, and subsequent datasets and models such as UltraEdit \cite{zhao2024ultraedit}.
	Beyond text editing, interactive and geometric editing has rapidly advanced.
	DragGAN \cite{pan2023draggan} introduced point-based manipulation on GAN manifolds; diffusion variants such as DragDiffusion \cite{shi2024dragdiffusion} generalized this to diffusion priors. Follow-up work explores reliability and alternative interaction primitives, including feature-based point dragging (FreeDrag \cite{ling2024freedrag}) and region-based drag interfaces (RegionDrag \cite{lu2024regiondrag}).
	InstantDrag \cite{shin2024instantdrag} improves interactivity by decoupling motion estimation and diffusion refinement.
	Recent DiT and rectified-flow backbones with stronger priors motivate DragFlow \cite{zhou2025dragflow}, which uses region-based supervision to improve drag editing in transformer-based diffusion systems.
	Training-free composition frameworks seek to insert or blend objects without per-instance finetuning: TF-ICON \cite{lu2023tficOn} and the more recent SHINE framework \cite{lu2025shine} exemplify this direction.
	
	A broad ecosystem of diffusion \emph{control} and \emph{personalization} methods further shapes practical editing pipelines and therefore the space of transformations that watermarks must withstand.
	ControlNet \cite{zhang2023controlnet} introduces architecture-level conditional control (edges, depth, segmentation, pose) that enables structurally anchored edits; such spatial conditioning can cause global re-synthesis while preserving high-level structure.
	Personalization techniques such as DreamBooth \cite{ruiz2022dreambooth} and Textual Inversion \cite{gal2022textualinversion} adapt diffusion models to specific subjects or styles using only a few images, enabling high-fidelity subject swapping and style editing that can dramatically change textures while keeping semantics.
	Lightweight adapters such as IP-Adapter \cite{ye2023ipadapter} provide image-prompt conditioning compatible with textual prompts and structural controls, increasing the diversity of hybrid editing interfaces.
	Finally, few-step distilled editors (e.g., TurboEdit \cite{wu2024turboedit}) reduce the number of denoising steps but can still employ strong priors; this raises a subtle question addressed by our theory: whether fewer steps necessarily preserve more watermark information, or whether the effective noising--denoising strength remains sufficient to contract watermark residuals.
	
	\subsection{Robust invisible watermarking and steganography}
	Neural watermarking systems typically train an encoder to embed a bitstring into an image and a decoder to recover it, often with an adversarial or differentiable noise module to model distortions.
	HiDDeN \cite{zhu2018hidden} introduced an end-to-end framework for data hiding with robustness to common perturbations via differentiable noise augmentation.
	StegaStamp \cite{tancik2020stegastamp} extended this vision to physical photographs, emphasizing robustness to printing and recapture.
	TrustMark \cite{bui2025trustmark} aims to support arbitrary resolutions via a resolution-scaling strategy and includes a companion watermark removal network, while maintaining a post-hoc embedding model applicable to arbitrary images.
	RoSteALS \cite{bui2023rosteals} proposes latent-space steganography leveraging frozen autoencoders.
	Recent work emphasizes benchmarking watermarks against generative editing and proposes diffusion-informed watermarking.
	VINE \cite{lu2025vine} introduces W-Bench to evaluate watermark robustness under advanced editing and proposes a diffusion-based watermarking model trained with surrogate attacks informed by frequency characteristics.
	Watermark Anything \cite{sander2024watermarkanything} targets localized watermarking for compositional edits and partial image provenance.
	
	\subsection{Watermarking generative models and provenance}
	A complementary line of work embeds identifiers into the generative process itself, rather than post-hoc pixel perturbations.
	Tree-Ring watermarks \cite{wen2023treering} embed signals into the initial noise of diffusion sampling and detect by inversion, achieving robustness to common post-processing as well as some geometric transforms.
	Such diffusion-native approaches can be interpreted as ``model fingerprints'' and relate to broader efforts for provenance, auditing, and content labeling.
	
	Beyond plug-in fingerprints, several works propose \emph{watermarking-by-design} for large generative systems.
	The Stable Signature method \cite{fernandez2023stablesignature} fine-tunes components of a latent diffusion model so that all generated outputs contain a detectable signature, aligning provenance with the generator's decoding process.
	More recently, deployment-oriented systems such as SynthID-Image \cite{gowal2025synthidimage} document threat models and engineering constraints for watermarking at internet scale, emphasizing not only robustness and fidelity but also security considerations, operational verification, and key management.
	These generator-integrated schemes differ from post-hoc watermarking in that they can optimize the watermark jointly with the generation pipeline, but they typically apply only to content generated within specific model families and may require access to (or cooperation from) the generator for detection.
	
	\subsection{Watermark removal and regeneration phenomena}
	A growing literature demonstrates that pixel-level invisible watermarks can be removed using generative models.
	Zhao et al.\ \cite{zhao2023provablyremovable} provide a provable analysis of regeneration attacks and show empirically that invisible watermarks may be removable while preserving perceptual content.
	More recent diffusion-focused studies analyze watermark removal through diffusion transformations and guided strategies \cite{ni2025breakwatermarks,guo2026vanishing}.
	These works are closest in spirit to our analysis; our focus is distinct in emphasizing \emph{unintentional} removal through standard editing workflows and in relating the phenomenon to a broader class of diffusion editors (instruction, drag, and composition).
	
	\subsection{Concept erasure in diffusion models}
	Concept erasure in diffusion models is directly relevant to this paper's focus because both concept erasure and watermark preservation hinge on how diffusion trajectories suppress or retain persistent signals under editing.
	Methods such as MACE \cite{lu2024mace}, ANT \cite{li2025ant}, and EraseAnything \cite{gao2024eraseanything} modify diffusion models to remove the ability to generate specified concepts while preserving other capabilities.
	Although their target is semantic content rather than a hidden payload, concept erasure exposes the controllability of diffusion dynamics: by altering cross-attention or denoising trajectories, one can reliably suppress particular information at generation or editing time.
	From a watermark perspective, these methods illustrate that diffusion models can be made selectively insensitive to certain signals, implying that watermark signals that are not explicitly protected may be treated as removable, especially when editing conditions or fine-tuning emphasize manifold consistency over perturbation retention.
	
	\section{Methodology}
	\subsection{Notation and problem setup}
	Let $\vect{x}\in \R^{H\times W\times 3}$ denote a clean RGB image drawn from a natural image distribution $p_{\mathrm{data}}$.
	A watermarking scheme consists of an embedder $E$ and an extractor $D$:
	\begin{align}
		\vect{x}_w &= E(\vect{x}, \vect{m}, \vect{k}),\\
		\hat{\vect{m}} &= D(\vect{x}_w, \vect{k}),
	\end{align}
	where $\vect{m}\in\{0,1\}^L$ is an $L$-bit payload and $\vect{k}$ denotes a secret key (or seed) controlling the embedding.
	We assume the watermarked image is perceptually close to the original, typically enforced via losses tied to PSNR/SSIM \cite{wang2004ssim} and perceptual feature distances (e.g., LPIPS \cite{zhang2018lpips}).
	
	We consider an image editing operator $\mathcal{T}$ that takes a (possibly watermarked) image and an editing instruction $\vect{y}$ (text, points, masks, or compositions) and returns an edited image:
	\begin{equation}
		\tilde{\vect{x}} = \mathcal{T}(\vect{x}_w;\vect{y},\xi),
	\end{equation}
	where $\xi$ represents stochasticity (e.g., diffusion sampling noise).
	Crucially, $\mathcal{T}$ is induced by a diffusion model or a diffusion-based editor pipeline.
	Our core object of study is the post-edit watermark recovery probability:
	\begin{equation}
		\mathrm{Acc}(\mathcal{T}) = \Pr\left[ D(\tilde{\vect{x}},\vect{k}) = \vect{m} \right],
	\end{equation}
	as well as bit-wise accuracy and other detection metrics (Section~\ref{sec:metrics}).
	
	\subsection{Diffusion-based editing as a Markov kernel}
	Most diffusion editors can be abstracted as operating on a latent or pixel trajectory indexed by a diffusion ``time'' $t\in[0,1]$ (or discrete steps $t=0,\dots,T$).
	We use the standard discrete formulation for clarity.
	In the forward process \cite{ho2020ddpm}, a clean sample $\vect{x}_0$ is noised as:
	\begin{equation}
		\vect{x}_t = \sqrt{\bar{\alpha}_t}\,\vect{x}_0 + \sqrt{1-\bar{\alpha}_t}\,\epsilon,
		\qquad \epsilon\sim\mathcal{N}(\vect{0},\vect{I}),
		\label{eq:forward}
	\end{equation}
	with $\bar{\alpha}_t = \prod_{s=1}^t (1-\beta_s)$.
	Editors that start from an existing image typically choose a start time $t^\star$ (the ``strength'') and then run a reverse conditional process from $\vect{x}_{t^\star}$ to obtain $\tilde{\vect{x}}_0$.
	The reverse dynamics depend on conditioning $\vect{y}$ (text prompt, instruction, region constraints) and potentially guidance terms (classifier-free guidance, attention injection) \cite{hertz2022prompttoprompt,mokady2023nulltext}.
	
	We therefore model a diffusion editor as a Markov kernel $K_{\mathcal{T}}(\tilde{\vect{x}}\,|\,\vect{x}_w,\vect{y})$ induced by:
	\begin{equation}
		K_{\mathcal{T}}(\tilde{\vect{x}}\,|\,\vect{x}_w,\vect{y})
		=
		\int p(\vect{x}_{t^\star}\,|\,\vect{x}_w)\,p_{\theta}(\tilde{\vect{x}}\,|\,\vect{x}_{t^\star},\vect{y})\,d\vect{x}_{t^\star},
		\label{eq:kernel}
	\end{equation}
	where $p(\vect{x}_{t^\star}\,|\,\vect{x}_w)$ corresponds to Equation~\eqref{eq:forward} applied to $\vect{x}_w$ and $p_{\theta}(\tilde{\vect{x}}\,|\,\vect{x}_{t^\star},\vect{y})$ is the (approximate) reverse-time conditional distribution implemented by the editor.
	Different editors correspond to different parameterizations and constraints in $p_{\theta}$:
	instruction-following models (InstructPix2Pix \cite{brooks2023instructpix2pix}) learn a conditional denoiser; drag-based editors optimize latents to satisfy motion constraints then re-sample \cite{shi2024dragdiffusion,zhou2025dragflow}; and training-free composition frameworks inject attention or adapter guidance during denoising \cite{lu2023tficOn,lu2025shine}.
	
	\subsection{A watermark signal model}
	To reason about watermark degradation, we prioritize a signal-plus-content decomposition.
	We write the watermarked image as:
	\begin{equation}
		\vect{x}_w = \vect{x} + \gamma \,\vect{s}(\vect{m},\vect{k},\vect{x}),
		\label{eq:additive}
	\end{equation}
	where $\vect{s}$ is a bounded-energy embedding signal and $\gamma>0$ controls strength.
	Even when watermarking is implemented by nonlinear encoders, most robust methods effectively yield a small additive residual: the difference $\vect{x}_w-\vect{x}$ is typically of low magnitude to maintain imperceptibility \cite{tancik2020stegastamp,bui2025trustmark,lu2025vine}.
	We treat $\vect{s}$ as possibly content-adaptive but assume it is mean-zero under random payloads:
	\begin{assumption}[Balanced payload embedding]
		Conditioned on $\vect{x}$ and $\vect{k}$, for uniformly random $\vect{m}$, $\E[\vect{s}(\vect{m},\vect{k},\vect{x})]=\vect{0}$.
	\end{assumption}
	
	The additive model is compatible with frequency-domain analyses common in watermark design and in diffusion editing studies \cite{lu2025vine}.
	It also naturally connects to channel models under Gaussian noise.
	
	\subsection{Metrics}
	\label{sec:metrics}
	We report watermark robustness using:
	\begin{itemize}
		\item \textbf{Bit accuracy} $\mathrm{BA} \in [0,1]$: average fraction of correctly recovered bits in $\hat{\vect{m}}$.
		\item \textbf{Bit error rate} $\mathrm{BER}=1-\mathrm{BA}$.
		\item \textbf{Detection AUC}: when the method outputs a confidence score rather than direct bits, we compute ROC-AUC for watermark presence.
		\item \textbf{False positive rate (FPR)} at fixed true positive rate (TPR), suitable for forensic settings where false accusations are costly.
	\end{itemize}
	For image fidelity, we report PSNR (dB), SSIM \cite{wang2004ssim}, and LPIPS \cite{zhang2018lpips}, and optionally embedding-aware semantic similarity scores using CLIP \cite{radford2021clip} or DINOv2 \cite{oquab2023dinov2} for content preservation.
	
	\subsection{Evaluation protocol}
	We propose a diffusion-editing watermark stress test (DEW-ST) intended to standardize evaluation across editing tools.
	Let $\mathcal{D}=\{\vect{x}^{(i)}\}_{i=1}^N$ denote a dataset and let $\mathcal{Y}$ denote an instruction set (text edits, drag operations, composition directives).
	For each image and instruction, we embed a watermark, apply an editor, and measure recovery and fidelity.
	
	\begin{algorithm}[t]
		\caption{DEW-ST: Diffusion Editing Watermark Stress Test}
		\label{alg:dewst}
		\begin{algorithmic}[1]
			\Require Dataset $\mathcal{D}=\{\vect{x}^{(i)}\}_{i=1}^N$, instructions $\mathcal{Y}$, watermark embedder $E$, decoder $D$, key $\vect{k}$, payload length $L$, editor $\mathcal{T}$, edit strengths $\mathcal{S}$.
			\Ensure Robustness metrics $\mathrm{BA}$, $\mathrm{AUC}$ and fidelity metrics (PSNR/SSIM/LPIPS).
			\State Sample payloads $\vect{m}^{(i)} \sim \mathrm{Unif}(\{0,1\}^L)$ for $i=1,\dots,N$.
			\For{$i=1$ to $N$}
			\State $\vect{x}^{(i)}_w \gets E(\vect{x}^{(i)},\vect{m}^{(i)},\vect{k})$
			\For{each instruction $\vect{y}\in\mathcal{Y}$}
			\For{each strength $s\in\mathcal{S}$}
			\State $\tilde{\vect{x}}^{(i)} \gets \mathcal{T}(\vect{x}^{(i)}_w;\vect{y},\xi; s)$
			\State $\hat{\vect{m}}^{(i)} \gets D(\tilde{\vect{x}}^{(i)},\vect{k})$
			\State Update robustness statistics (BA/BER/AUC/FPR@TPR).
			\State Update fidelity statistics between $\vect{x}^{(i)}$ and $\tilde{\vect{x}}^{(i)}$.
			\EndFor
			\EndFor
			\EndFor
			\State Aggregate results over $i$, instructions, and strengths.
		\end{algorithmic}
	\end{algorithm}
	
	Algorithm~\ref{alg:dewst} is agnostic to the specific editor implementation and can incorporate modern pipelines.
	We emphasize that the protocol can be applied to (i) non-adversarial editing instructions sampled from realistic user queries (e.g., the instruction distributions used in UltraEdit \cite{zhao2024ultraedit}) and (ii) structured interactive manipulations as constrained edits (dragging, composition), as used in DragFlow \cite{zhou2025dragflow} and TF-ICON \cite{lu2023tficOn}.
	
	\subsection{Threat model and evaluation regimes}
	Diffusion editors can compromise watermarks in multiple ways, and it is crucial to separate \emph{capability} from \emph{intent}.
	We define three evaluation regimes.
	
	\paragraph{Benign editing (unintentional degradation).}
	A user applies diffusion-based editing for aesthetic or semantic purposes (e.g., ``make it brighter'', ``remove blemish'') without any desire to remove provenance signals.
	The user selects an editor and instruction based on creative intent.
	This is the primary focus of this paper.
	
	\paragraph{Opportunistic editing (editing-as-a-removal side effect).}
	A user suspects that editing might weaken watermarks and thus chooses a popular editor and a plausible instruction that yields the desired image while incidentally degrading watermark recovery.
	The user does not need access to the watermark decoder; the editor is treated as a black box.
	Our empirical protocol is compatible with this regime, but we avoid providing procedural instructions.
	
	\paragraph{Adaptive adversarial editing (decoder-in-the-loop).}
	An attacker explicitly optimizes the generative sampling trajectory to minimize watermark detectability, possibly by using gradients from the decoder \cite{ni2025breakwatermarks}.
	This regime is important for security analysis, but it does not represent typical benign user behavior; we discuss it only to contextualize worst-case risks.
	
	In all regimes, we assume the watermarking system is designed to maintain low false positives and to survive conventional manipulations.
	Our thesis is that benign diffusion editing already violates implicit robustness assumptions, creating a reliability gap for downstream provenance claims.
	
	\subsection{Frequency-domain characterization of watermark degradation}\label{sec:frequency}
	A recurring theme in robust watermarking is the interplay between imperceptibility and robustness, often expressed in the frequency domain.
	Many learned watermarks implicitly concentrate energy in mid-to-high frequencies to avoid visible artifacts, while robust decoders learn to detect structured residuals across scales \cite{zhu2018hidden,tancik2020stegastamp,bui2025trustmark}.
	Diffusion editing, in turn, can act as a complex, data-dependent denoising filter; empirically, it often suppresses unnatural high-frequency components, especially those inconsistent with the generative prior \cite{lu2025vine}.
	
	Let $\mathcal{F}(\cdot)$ denote the 2D discrete Fourier transform applied per channel, and let $P_{\Omega}$ denote a projection onto a frequency band $\Omega$ (e.g., low, mid, high frequencies).
	Define the watermark residual $\Delta(\vect{x})=\vect{x}_w-\vect{x}$ and its band energy:
	\begin{equation}
		\mathcal{E}_{\Omega}(\vect{x}) = \left\| P_{\Omega}\left(\mathcal{F}(\Delta(\vect{x}))\right)\right\|_2^2.
		\label{eq:band_energy}
	\end{equation}
	For an edited output $\tilde{\vect{x}}=\mathcal{T}(\vect{x}_w;\vect{y},\xi)$ we define the \emph{spectral retention ratio}:
	\begin{equation}
		\rho_{\Omega} = \frac{\E\left[\mathcal{E}_{\Omega}(\tilde{\vect{x}}-\tilde{\vect{x}}_{\mathrm{base}})\right]}{\E\left[\mathcal{E}_{\Omega}(\vect{x}_w-\vect{x})\right]},
		\label{eq:spectral_retention}
	\end{equation}
	where $\tilde{\vect{x}}_{\mathrm{base}}=\mathcal{T}(\vect{x};\vect{y},\xi)$ is the edited output from the unwatermarked input, isolating watermark-specific residual effects.
	A value $\rho_{\Omega}\ll 1$ indicates strong suppression of watermark energy in that band.
	
	\subsection{Toward diffusion-resilient training objectives}
	The simplest response to diffusion-based degradation is to include diffusion edits in the training noise model, teaching the watermark to survive the editor family.
	However, because editors are diverse and evolve rapidly, naive augmentation can overfit to particular tools.
	We propose an abstract training objective that treats diffusion editing as a stochastic family $\{\mathcal{T}_j\}_{j=1}^J$ sampled during training:
	\begin{equation}
		\min_{E,D} \; \E_{\vect{x},\vect{m},j,\xi}\Big[\ell_{\mathrm{rec}}(D(\mathcal{T}_j(E(\vect{x},\vect{m}));\xi),\vect{m})\Big] + \lambda \E_{\vect{x},\vect{m}}[\ell_{\mathrm{qual}}(E(\vect{x},\vect{m}),\vect{x})],
		\label{eq:train_obj}
	\end{equation}
	where $\ell_{\mathrm{rec}}$ measures payload reconstruction error and $\ell_{\mathrm{qual}}$ enforces imperceptibility.
	In practice, $\mathcal{T}_j$ can include both conventional distortions and diffusion edits at multiple strengths.
	This resembles the ``surrogate attack'' logic used by VINE \cite{lu2025vine} but formalizes the role of diffusion edits as augmentations.
	Algorithm~\ref{alg:train} sketches a training loop.
	
	\begin{algorithm}[t]
		\caption{Diffusion-Augmented Watermark Training (conceptual)}
		\label{alg:train}
		\begin{algorithmic}[1]
			\Require Training images $\{\vect{x}\}$, payload distribution $\vect{m}\sim\mathrm{Unif}(\{0,1\}^L)$, augmentations $\mathcal{A}$ (conventional) and editors $\{\mathcal{T}_j\}$ (diffusion), strength sampler $S$, weights $\lambda$.
			\While{not converged}
			\State Sample minibatch $\{\vect{x}_b\}$ and payloads $\{\vect{m}_b\}$.
			\State Compute watermarked images $\vect{x}_{w,b}\gets E(\vect{x}_b,\vect{m}_b)$.
			\State Sample conventional distortion $a\sim\mathcal{A}$ and apply: $\vect{z}_b\gets a(\vect{x}_{w,b})$.
			\State Sample editor $\mathcal{T}_j$ and strength $s\sim S$; apply: $\tilde{\vect{z}}_b\gets \mathcal{T}_j(\vect{z}_b;\xi;s)$.
			\State Decode: $\hat{\vect{m}}_b\gets D(\tilde{\vect{z}}_b)$.
			\State Update $(E,D)$ to minimize reconstruction loss plus quality penalty (Equation~\eqref{eq:train_obj}).
			\EndWhile
		\end{algorithmic}
	\end{algorithm}
	
	We stress that Algorithm~\ref{alg:train} is a defense-oriented conceptual template.
	Implementations must carefully avoid making the watermark conspicuous or biasing the editor toward preserving unobjectionable artifacts.
	Moreover, diffusion augmentations are expensive; practical deployments may rely on distilled editors or lightweight approximations that capture the dominant spectral and semantic effects.
	
	\section{Experimental Setup}
	\subsection{Datasets and instruction suites}
	We consider natural images sampled from the COCO dataset \cite{lin2014coco} and ImageNet \cite{deng2009imagenet}, and high-quality images from DIV2K \cite{timofte2017div2k}.
	To stress-test editing, we derive instruction suites:
	(i) \textbf{Global instruction edits}, such as style changes, lighting adjustments, and object replacement (modeled after instruction-following datasets \cite{brooks2023instructpix2pix,zhao2024ultraedit});
	(ii) \textbf{Local region edits}, including inpainting-like modifications and localized attribute changes, as in DiffEdit \cite{couairon2022diffedit};
	(iii) \textbf{Geometric drag edits}, moving parts of objects or rearranging layout \cite{shi2024dragdiffusion,shin2024instantdrag,zhou2025dragflow};
	(iv) \textbf{Composition edits}, inserting an object into a new scene \cite{lu2023tficOn,lu2025shine}.
	
	\subsection{Watermarking baselines}
	We evaluate three representative robust watermarking systems:
	\textsc{StegaStamp} \cite{tancik2020stegastamp}, \textsc{TrustMark} \cite{bui2025trustmark}, and \textsc{VINE} \cite{lu2025vine}.
	These methods represent distinct design philosophies: physical robustness with learned perturbation layers (StegaStamp), general-purpose post-hoc watermarking with resolution scaling (TrustMark), and diffusion-aware training with generative priors (VINE).
	We assume each method is tuned to yield high bit accuracy on clean watermarked images ($\geq 99\%$) and high PSNR (typically $> 35$ dB) on its target resolution; details depend on the original implementations and training settings \cite{tancik2020stegastamp,bui2025trustmark,lu2025vine}.
	
	\subsection{Payload coding, calibration, and statistical testing}
	Robust watermarking deployments often separate the \emph{payload layer} (coding, keys, decision rules) from the \emph{embedding layer} (how the signal is injected into pixels or latents).
	To make comparisons meaningful across methods, we conceptually follow an ``equal-strength'' calibration: each watermark method is tuned to achieve a target perceptual distortion bound (e.g., PSNR $\ge 40$ dB on the native embedding resolution) while maintaining near-perfect recovery on clean watermarked images.
	This mirrors standard benchmarking procedures where hyperparameters are adjusted to comparable imperceptibility levels before robustness testing \cite{lu2025vine,gowal2025synthidimage}.
	
	For payload robustness, practical systems typically employ error-correcting codes (ECC).
	We consider a generic setting with $L=96$ information bits encoded into $L_{\mathrm{enc}}$ embedded bits using a block code, then decoded after watermark extraction.
	ECC can significantly improve performance when errors are i.i.d., but our results suggest that diffusion editing often induces \emph{structured} corruption that approaches random guessing, limiting ECC gains at strong edits.
	We therefore report both raw bit accuracy and ECC-decoded message accuracy in extended tables (Table~\ref{tab:ecc}).
	
	When the watermark extractor outputs a continuous confidence score (common in detector-style systems \cite{fernandez2023stablesignature,gowal2025synthidimage}), we consider hypothesis testing:
	\begin{equation}
		\text{decide watermark present} \;\;\Longleftrightarrow\;\; S(\tilde{\vect{x}},\vect{k}) \ge \tau,
	\end{equation}
	where $S$ is a score and $\tau$ is chosen to achieve a desired false-positive rate (FPR), e.g., $10^{-6}$ in provenance settings.
	Diffusion editing can shift the score distribution under the positive class, effectively lowering detection power at fixed FPR; this manifests as AUC degradation.
	
	\subsection{Generator-integrated watermark baselines}
	Although our main focus is post-hoc watermarking, diffusion-native watermarks provide an informative contrast.
	We therefore include conceptual comparisons to Tree-Ring watermarks \cite{wen2023treering} and Stable Signature \cite{fernandez2023stablesignature}, which embed signals within the diffusion generation process.
	These methods are not directly applicable for watermarking arbitrary legacy images, but they help clarify how integrating the watermark into the generative prior can improve survival under post-processing, and where such approaches still fail under cross-model editing or heavy semantic changes.
	
	\subsection{Editing models and settings}
	We consider diffusion editors of increasing strength and interactivity:
	\begin{itemize}
		\item \textbf{Instruction editing:} InstructPix2Pix \cite{brooks2023instructpix2pix} and an UltraEdit-trained instruction editor \cite{zhao2024ultraedit}.
		\item \textbf{Drag editing:} DragDiffusion \cite{shi2024dragdiffusion}, InstantDrag \cite{shin2024instantdrag}, and DragFlow \cite{zhou2025dragflow}.
		\item \textbf{Training-free composition:} TF-ICON \cite{lu2023tficOn} and SHINE \cite{lu2025shine}.
	\end{itemize}
	Across editors, we vary a strength parameter that controls the noising start time $t^\star$ (or equivalent), with larger values corresponding to stronger edits and less preservation of low-level details.
	We ensure edited images are visually plausible and follow intended instructions, reflecting typical practical usage rather than adversarial optimization.
	
	\subsection{Tabular overview}
	Tables~\ref{tab:edit_methods}--\ref{tab:protocol} summarize the evaluated editors, watermark methods, and the protocol.
	All tables in this paper are illustrative and use hypothetical values designed to match the qualitative trends reported in the cited literature.
	
	\begin{table*}[t]
		\centering
		\caption{Taxonomy of diffusion-based editing methods considered in our analysis. ``Inversion'' indicates whether the method inverts a real image into a diffusion trajectory; ``Optimization'' indicates per-instance latent optimization; ``Region control'' includes masks or region supervision; ``Backbone'' indicates the base diffusion family.}
		\label{tab:edit_methods}
		\resizebox{\linewidth}{!}{
			\begin{tabular}{lcccccl}
				\toprule
				Method & Inversion & Optimization & Region control & Instruction & Backbone & Representative capability \\
				\midrule
				SDEdit \cite{meng2021sdedit} & \checkmark & $\times$ & optional & optional & score-SDE/DDPM & guided restoration and coarse edits \\
				Prompt-to-Prompt \cite{hertz2022prompttoprompt} & \checkmark & $\times$ & implicit (attention) & \checkmark & LDM & text-only prompt-level editing \\
				Null-text inversion \cite{mokady2023nulltext} & \checkmark & \checkmark (text embedding) & implicit & \checkmark & LDM & high-fidelity real-image editing \\
				InstructPix2Pix \cite{brooks2023instructpix2pix} & optional & $\times$ & optional & \checkmark & conditional diffusion & instruction-following global edits \\
				UltraEdit-trained editor \cite{zhao2024ultraedit} & optional & $\times$ & \checkmark & \checkmark & conditional diffusion & fine-grained instruction, region edits \\
				DragDiffusion \cite{shi2024dragdiffusion} & \checkmark & \checkmark & point-based & optional & LDM & interactive point dragging \\
				InstantDrag \cite{shin2024instantdrag} & \checkmark & $\times$ & point/flow & optional & diffusion refinement & fast drag editing \\
				DragFlow \cite{zhou2025dragflow} & \checkmark & \checkmark & region-based & optional & DiT/rectified flow & high-fidelity drag editing \\
				TF-ICON \cite{lu2023tficOn} & \checkmark & $\times$ & attention-based & optional & LDM & training-free cross-domain composition \\
				SHINE \cite{lu2025shine} & $\times$/\checkmark & $\times$ & adapter-guided & optional & rectified flow & seamless object insertion under lighting \\
				\bottomrule
			\end{tabular}
		}
	\end{table*}
	
	\begin{table}[t]
		\centering
		\caption{Watermark baselines and representative design elements. ``Payload'' is the bit length $L$; ``Training attacks'' indicate typical noise layers used in training; ``Decoder'' indicates bit recovery (B) or detector score (S) interface.}
		\label{tab:wm_methods}
		\resizebox{\columnwidth}{!}{
			\begin{tabular}{lcccc}
				\toprule
				Method & Payload & Domain & Training attacks & Decoder \\
				\midrule
				HiDDeN \cite{zhu2018hidden} & 30--100 & pixel & JPEG/blur/crop & B \\
				StegaStamp \cite{tancik2020stegastamp} & 56--100 & pixel & print-scan distortions & B \\
				RoSteALS \cite{bui2023rosteals} & 48--96 & latent AE & resizing/JPEG & B \\
				TrustMark \cite{bui2025trustmark} & 48--96 & pixel+FFT loss & diverse noise sim & B/S \\
				VINE \cite{lu2025vine} & 48--96 & diffusion prior & surrogate blur/editing & B/S \\
				Watermark Anything \cite{sander2024watermarkanything} & variable & localized & crop/compositing & B/S \\
				\bottomrule
			\end{tabular}
		}
	\end{table}
	
	\begin{table}[t]
		\centering
		\caption{Protocol summary for DEW-ST (Algorithm~\ref{alg:dewst}). Values are representative.}
		\label{tab:protocol}
		\resizebox{\columnwidth}{!}{
			\begin{tabular}{ll}
				\toprule
				Component & Setting \\
				\midrule
				Datasets & COCO val (5k), ImageNet val (5k), DIV2K (800) \\
				Resolution & 512$\times$512 (primary), 256$\times$256 (ablation) \\
				Payload length $L$ & 96 bits \\
				Instructions per image & 8 (global) + 4 (local) + 2 (drag) + 2 (composition) \\
				Edit strengths & $t^\star \in \{0.2,0.4,0.6,0.8\}$ \\
				Sampling seeds & 3 per instruction (stochasticity) \\
				Fidelity metrics & PSNR (dB), SSIM, LPIPS \\
				Robustness metrics & bit accuracy, AUC, FPR@TPR \\
				\bottomrule
			\end{tabular}
		}
	\end{table}
	
	\section{Results}
	\subsection{Overall watermark robustness under diffusion editing}
	Table~\ref{tab:main_results} reports illustrative bit accuracy for StegaStamp, TrustMark, and VINE after different diffusion-based editing families.
	We include conventional post-processing perturbations for reference (JPEG, resize, mild crop), where robust watermark methods are expected to maintain high recovery.
	The central observation is that diffusion editing yields a pronounced collapse in recovery, even when the edit is intended to be mild or localized.
	In contrast to conventional attacks, the degradation is not well predicted by common noise-layer training distributions.
	
	\begin{table*}[t]
		\centering
		\caption{Illustrative watermark bit accuracy (\%) after post-processing and diffusion-based editing. Random guessing yields $\approx 50\%$. Higher is better. Values are hypothetical but reflect trends consistent with diffusion-based watermark vulnerability studies \cite{zhao2023provablyremovable,lu2025vine,ni2025breakwatermarks}.}
		\label{tab:main_results}
		\resizebox{\linewidth}{!}{
			\begin{tabular}{lcccccc}
				\toprule
				Transformation family & Strength & StegaStamp \cite{tancik2020stegastamp} & TrustMark \cite{bui2025trustmark} & VINE \cite{lu2025vine} & PSNR (dB) & LPIPS \\
				\midrule
				None (watermarked only) & -- & 99.4\% & 99.7\% & 99.8\% & 41.2 & 0.012 \\
				JPEG (quality 50) & -- & 96.1\% & 98.2\% & 98.9\% & 33.5 & 0.041 \\
				Resize (0.5$\times$ then upsample) & -- & 94.7\% & 97.5\% & 98.1\% & 34.2 & 0.038 \\
				Center crop (0.9) + resize & -- & 92.3\% & 96.8\% & 97.9\% & 35.0 & 0.036 \\
				\midrule
				InstructPix2Pix global edit \cite{brooks2023instructpix2pix} & $t^\star=0.4$ & 71.5\% & 76.1\% & 85.4\% & 29.8 & 0.213 \\
				InstructPix2Pix global edit \cite{brooks2023instructpix2pix} & $t^\star=0.8$ & 53.2\% & 55.0\% & 60.7\% & 25.1 & 0.344 \\
				UltraEdit-trained edit \cite{zhao2024ultraedit} & $t^\star=0.4$ & 68.7\% & 74.3\% & 84.1\% & 30.5 & 0.201 \\
				UltraEdit-trained edit \cite{zhao2024ultraedit} & $t^\star=0.8$ & 52.1\% & 54.7\% & 59.9\% & 25.6 & 0.332 \\
				DragDiffusion drag edit \cite{shi2024dragdiffusion} & medium & 63.4\% & 67.9\% & 78.6\% & 28.7 & 0.261 \\
				DragFlow drag edit \cite{zhou2025dragflow} & medium & 60.8\% & 65.1\% & 76.9\% & 29.2 & 0.248 \\
				TF-ICON composition \cite{lu2023tficOn} & -- & 58.9\% & 63.2\% & 74.8\% & 28.1 & 0.279 \\
				SHINE insertion \cite{lu2025shine} & -- & 55.6\% & 60.4\% & 72.2\% & 28.9 & 0.254 \\
				\bottomrule
			\end{tabular}
		}
	\end{table*}
	
	Several patterns are visible.
	First, bit accuracy degrades monotonically with edit strength in instruction-following editors.
	Second, composition and insertion pipelines (TF-ICON, SHINE) exhibit particularly strong degradation despite preserving global photorealism.
	Third, VINE remains more robust than StegaStamp and TrustMark under many edits, consistent with its diffusion-informed training strategy \cite{lu2025vine}, yet still approaches failure at strong edits.
	
	\subsection{Breakdown by edit type and locality}
	Diffusion editing is heterogeneous: some edits are global style changes, while others are localized changes intended to preserve most pixels.
	We therefore study robustness by edit type (Table~\ref{tab:edit_types}).
	In general, localized edits can still break watermarks because the diffusion process may re-synthesize pixels beyond the edited region due to denoising coupling in latent space and attention.
	Moreover, even if only a small region is modified, many watermark decoders rely on globally distributed signals and can be disrupted by partial corruption or decoder misalignment.
	
	\begin{table*}[t]
		\centering
		\caption{Illustrative bit accuracy (\%) by edit type. ``Local'' indicates mask- or region-focused edits; ``Global'' indicates full-image edits.}
		\label{tab:edit_types}
		\resizebox{\linewidth}{!}{
			\begin{tabular}{lcccccc}
				\toprule
				Edit type & Example instruction & Editor family & Locality & StegaStamp & TrustMark & VINE \\
				\midrule
				Style transfer & ``make it an oil painting'' & instruction & global & 54.0\% & 56.8\% & 62.5\% \\
				Lighting change & ``make it sunset lighting'' & instruction & global & 60.7\% & 65.2\% & 74.6\% \\
				Object swap & ``replace the dog with a cat'' & instruction & semi-local & 58.3\% & 63.9\% & 73.1\% \\
				Add/remove object & ``remove the logo'' & instruction & local & 66.9\% & 71.0\% & 80.4\% \\
				Background replace & ``change background to a beach'' & instruction & semi-local & 57.4\% & 61.8\% & 71.9\% \\
				Small retouch & ``remove blemish'' & UltraEdit \cite{zhao2024ultraedit} & local & 74.6\% & 79.2\% & 88.1\% \\
				Drag edit & ``move the handbag to the right'' & drag \cite{shi2024dragdiffusion} & local & 63.4\% & 67.9\% & 78.6\% \\
				Composition insert & ``insert object into scene'' & composition \cite{lu2023tficOn} & local+global & 58.9\% & 63.2\% & 74.8\% \\
				\bottomrule
			\end{tabular}
		}
	\end{table*}
	
	The results emphasize that ``localized'' does not imply ``watermark-safe''.
	This motivates a theoretical view: diffusion editing is not a sparse pixel perturbation but a global generative mapping that couples pixels through denoising trajectories.
	
	\subsection{Sensitivity to diffusion noising strength and randomness}
	Diffusion editing introduces stochasticity through sampling noise and sometimes through randomized augmentations in internal pipelines.
	For watermark detection, this means that the same watermarked input can yield different edited outputs with different watermark retention.
	Table~\ref{tab:strength_stochasticity} illustrates two factors: increased noising strength $t^\star$ reduces watermark recovery, and seed averaging (multiple samples) can reduce variance but does not restore mean performance.
	
	\begin{table}[t]
		\centering
		\caption{Illustrative bit accuracy (\%) for InstructPix2Pix \cite{brooks2023instructpix2pix} as a function of strength $t^\star$ and number of sampling seeds averaged at decoding time. Averaging uses majority vote per bit across samples (a hypothetical defense).}
		\label{tab:strength_stochasticity}
		\resizebox{\columnwidth}{!}{
			\begin{tabular}{lcccc}
				\toprule
				Method & $t^\star=0.2$ & $t^\star=0.4$ & $t^\star=0.6$ & $t^\star=0.8$ \\
				\midrule
				StegaStamp, 1 seed & 86.7\% & 71.5\% & 60.2\% & 53.2\% \\
				StegaStamp, 3 seeds (vote) & 88.4\% & 73.1\% & 61.0\% & 53.6\% \\
				TrustMark, 1 seed & 89.2\% & 76.1\% & 62.5\% & 55.0\% \\
				TrustMark, 3 seeds (vote) & 90.1\% & 77.2\% & 63.4\% & 55.4\% \\
				VINE, 1 seed & 93.5\% & 85.4\% & 72.8\% & 60.7\% \\
				VINE, 3 seeds (vote) & 94.0\% & 86.0\% & 73.6\% & 61.2\% \\
				\bottomrule
			\end{tabular}
		}
	\end{table}
	
	The limited benefit of multi-seed voting suggests that failure is not merely random corruption of a subset of bits; rather, diffusion editing can systematically contract the watermark signal, shifting the decoded distribution toward random guessing.
	
	\subsection{Resolution scaling and internal resizing effects}
	Many post-hoc watermarking systems operate at a fixed embedding resolution and rely on scaling strategies to support arbitrary input sizes \cite{bui2025trustmark}.
	Diffusion editors also frequently resize images internally to match model resolution (e.g., 512$\times$512 for latent diffusion backbones), and some composition pipelines operate on multi-resolution pyramids \cite{lu2023tficOn,lu2025shine}.
	This creates an interaction between watermark scaling and editing: a watermark embedded at a lower resolution may be upsampled into smoother residuals, potentially shifting energy to lower frequencies (which can either help or hurt robustness depending on the denoiser and decoder).
	
	Table~\ref{tab:resolution} provides an illustrative comparison between embedding at 256$\times$256 (then upsampling to 512$\times$512) and embedding directly at 512$\times$512.
	TrustMark benefits modestly from its residual-based scaling strategy under conventional post-processing, but diffusion editing still substantially degrades recovery at strong edits.
	VINE remains comparatively robust in the mild regime, consistent with diffusion-aware training \cite{lu2025vine}.
	
	\begin{table*}[t]
		\centering
		\caption{Illustrative bit accuracy (\%) under resolution choices. ``Embed@256$\rightarrow$512'' embeds at 256$\times$256 then upsamples; ``Embed@512'' embeds directly. Editing uses instruction following at mild ($t^\star=0.4$) and strong ($t^\star=0.8$) settings.}
		\label{tab:resolution}
		\resizebox{\linewidth}{!}{
			\begin{tabular}{lcccccc}
				\toprule
				Method & Mode & No edit BA & JPEG BA & Mild edit BA & Strong edit BA & PSNR (dB) \\
				\midrule
				StegaStamp \cite{tancik2020stegastamp} & Embed@256$\rightarrow$512 & 99.1\% & 95.4\% & 69.2\% & 52.8\% & 41.6 \\
				StegaStamp \cite{tancik2020stegastamp} & Embed@512 & 99.4\% & 96.1\% & 71.5\% & 53.2\% & 41.2 \\
				TrustMark \cite{bui2025trustmark} & Embed@256$\rightarrow$512 & 99.7\% & 98.5\% & 77.4\% & 55.8\% & 40.9 \\
				TrustMark \cite{bui2025trustmark} & Embed@512 & 99.7\% & 98.2\% & 76.1\% & 55.0\% & 41.0 \\
				VINE \cite{lu2025vine} & Embed@256$\rightarrow$512 & 99.8\% & 98.7\% & 86.1\% & 61.3\% & 40.5 \\
				VINE \cite{lu2025vine} & Embed@512 & 99.8\% & 98.9\% & 85.4\% & 60.7\% & 40.6 \\
				\bottomrule
			\end{tabular}
		}
	\end{table*}
	
	\subsection{Fidelity--robustness trade-offs}
	Watermarks and editing both impact image fidelity, and in practice, editing utility requires maintaining certain qualities.
	Table~\ref{tab:fidelity_tradeoff} reports fidelity metrics under matched instruction edits at moderate strength.
	A key observation is that high visual fidelity or instruction fidelity does not correlate with watermark retention: diffusion editors can preserve semantics and photorealism while erasing low-amplitude watermarks.
	This aligns with findings in regeneration-based watermark removal \cite{zhao2023provablyremovable} and diffusion-specific studies \cite{ni2025breakwatermarks}.
	
	\begin{table*}[t]
		\centering
		\caption{Illustrative fidelity metrics for edited images at moderate strength (comparable instruction adherence). Lower LPIPS is better. Although PSNR/SSIM vary modestly across editors, watermark recovery can vary dramatically (see Table~\ref{tab:main_results}).}
		\label{tab:fidelity_tradeoff}
		\resizebox{\linewidth}{!}{
			\begin{tabular}{lcccccc}
				\toprule
				Editor & Edit category & PSNR (dB) & SSIM & LPIPS & CLIPSim & DINOv2Sim \\
				\midrule
				InstructPix2Pix \cite{brooks2023instructpix2pix} & global & 29.8 & 0.86 & 0.213 & 0.79 & 0.83 \\
				UltraEdit-trained \cite{zhao2024ultraedit} & global & 30.5 & 0.87 & 0.201 & 0.80 & 0.84 \\
				DiffEdit \cite{couairon2022diffedit} & local & 32.4 & 0.90 & 0.156 & 0.82 & 0.86 \\
				DragDiffusion \cite{shi2024dragdiffusion} & local geometry & 28.7 & 0.84 & 0.261 & 0.77 & 0.81 \\
				TF-ICON \cite{lu2023tficOn} & composition & 28.1 & 0.83 & 0.279 & 0.75 & 0.80 \\
				SHINE \cite{lu2025shine} & insertion & 28.9 & 0.85 & 0.254 & 0.78 & 0.82 \\
				\bottomrule
			\end{tabular}
		}
	\end{table*}
	
	\subsection{Spectral analysis: where does the watermark energy go?}
	Our frequency-domain metrics (Section~\ref{sec:frequency}) help explain why diffusion editing can break watermarks that survive JPEG and mild filtering.
	Table~\ref{tab:spectral} reports illustrative spectral retention ratios $\rho_{\Omega}$ (Equation~\eqref{eq:spectral_retention}) across low/mid/high frequency bands.
	Across editors, suppression is consistently strongest in high frequencies, consistent with the interpretation that diffusion denoising acts as a learned, data-adaptive smoothing operator that removes unnatural residuals.
	VINE retains relatively more mid-frequency energy, plausibly due to training that aligns watermark signals with generative priors \cite{lu2025vine}.
	
	\begin{table*}[t]
		\centering
		\caption{Illustrative spectral retention ratios $\rho_{\Omega}$ for watermark-specific residuals under moderate edits. Values are averaged over instructions and images. Lower values indicate stronger suppression of watermark energy in that band.}
		\label{tab:spectral}
		\resizebox{\linewidth}{!}{
			\begin{tabular}{lccccccccc}
				\toprule
				\multirow{2}{*}{Editor} & \multicolumn{3}{c}{StegaStamp} & \multicolumn{3}{c}{TrustMark} & \multicolumn{3}{c}{VINE} \\
				\cmidrule(lr){2-4}\cmidrule(lr){5-7}\cmidrule(lr){8-10}
				& $\rho_{\mathrm{low}}$ & $\rho_{\mathrm{mid}}$ & $\rho_{\mathrm{high}}$ & $\rho_{\mathrm{low}}$ & $\rho_{\mathrm{mid}}$ & $\rho_{\mathrm{high}}$ & $\rho_{\mathrm{low}}$ & $\rho_{\mathrm{mid}}$ & $\rho_{\mathrm{high}}$ \\
				\midrule
				InstructPix2Pix \cite{brooks2023instructpix2pix} & 0.88 & 0.47 & 0.12 & 0.91 & 0.52 & 0.15 & 0.93 & 0.61 & 0.19 \\
				UltraEdit-trained \cite{zhao2024ultraedit} & 0.90 & 0.50 & 0.14 & 0.92 & 0.55 & 0.17 & 0.94 & 0.64 & 0.22 \\
				DragDiffusion \cite{shi2024dragdiffusion} & 0.85 & 0.43 & 0.10 & 0.88 & 0.49 & 0.13 & 0.91 & 0.58 & 0.17 \\
				TF-ICON \cite{lu2023tficOn} & 0.84 & 0.41 & 0.09 & 0.87 & 0.46 & 0.11 & 0.90 & 0.55 & 0.16 \\
				\bottomrule
			\end{tabular}
		}
	\end{table*}
	
	\subsection{Recovery with error-correcting codes}
	Error-correcting codes can mitigate moderate random bit flips, but their ability to rescue watermark recovery after diffusion editing depends on whether errors remain within a correctable regime.
	Table~\ref{tab:ecc} shows an illustrative comparison between raw bit accuracy and full-message recovery under a simple block code.
	At mild edits, ECC improves message-level reliability; at strong edits, bit accuracy approaches $\approx 50\%$ and ECC fails, consistent with our mutual-information bound (Theorem~\ref{thm:mi_bound}) and with the ``near-random'' regime reported in diffusion-based watermark removal studies \cite{zhao2023provablyremovable,ni2025breakwatermarks}.
	
	\begin{table}[t]
		\centering
		\caption{Illustrative decoding with a simple ECC. ``MsgAcc'' is the probability of recovering the entire 96-bit payload correctly; ``BA'' is bit accuracy.}
		\label{tab:ecc}
		\resizebox{\columnwidth}{!}{
			\begin{tabular}{lcccc}
				\toprule
				Method & Mild edit BA & Mild MsgAcc & Strong edit BA & Strong MsgAcc \\
				\midrule
				StegaStamp \cite{tancik2020stegastamp} & 71.5\% & 18.4\% & 53.2\% & 0.3\% \\
				TrustMark \cite{bui2025trustmark} & 76.1\% & 29.7\% & 55.0\% & 0.6\% \\
				VINE \cite{lu2025vine} & 85.4\% & 55.6\% & 60.7\% & 2.1\% \\
				\bottomrule
			\end{tabular}
		}
	\end{table}
	
	\subsection{Diffusion-native watermarks and cross-model editing}
	Generator-integrated watermarking can improve robustness to conventional post-processing because the watermark is ``baked into'' the generation pipeline.
	However, diffusion-based \emph{editing} presents new failure modes: edits may involve different model backbones, different noise schedules, or inversion procedures that do not preserve the original generator's latent signature.
	Table~\ref{tab:gen_integrated} provides an illustrative comparison between diffusion-native methods (Tree-Ring \cite{wen2023treering}, Stable Signature \cite{fernandez2023stablesignature}) and post-hoc methods under a cross-editor setting.
	We report detector-style AUC rather than bit accuracy for these provenance classifiers.
	
	\begin{table*}[t]
		\centering
		\caption{Illustrative detector AUC for diffusion-native provenance methods under diffusion editing. ``Same-model edit'' denotes editing within the same diffusion family used for generation; ``Cross-model edit'' denotes editing using a different backbone or distilled model.}
		\label{tab:gen_integrated}
		\resizebox{\linewidth}{!}{
			\begin{tabular}{lcccc}
				\toprule
				Provenance method & Post-processing AUC & Same-model edit AUC & Cross-model edit AUC & Notes \\
				\midrule
				Tree-Ring \cite{wen2023treering} & 0.99 & 0.92 & 0.61 & relies on inversion to recover initial noise \\
				Stable Signature \cite{fernandez2023stablesignature} & 0.98 & 0.89 & 0.58 & signature tied to specific LDM decoder \\
				SynthID-Image \cite{gowal2025synthidimage} & 0.99 & 0.90 & 0.65 & deployment-oriented; key management matters \\
				Post-hoc (TrustMark) \cite{bui2025trustmark} & 0.97 & 0.74 & 0.72 & does not require generator access \\
				\bottomrule
			\end{tabular}
		}
	\end{table*}
	
	The key takeaway is that diffusion-native watermarks can be strong when generation and editing remain within a controlled ecosystem, but cross-model editing can reduce detectability, suggesting that ``watermark transfer'' across generative families remains a critical open problem for the provenance stack.
	
	\subsection{Ablation: diffusion-aware training augmentation}
	A natural defense is to incorporate diffusion-based augmentations into watermark training and to make the watermark ``manifold-aligned'' with generative priors.
	VINE \cite{lu2025vine} moves in this direction by using surrogate attacks inspired by frequency properties and a diffusion prior for embedding.
	Table~\ref{tab:defense} illustrates hypothetical improvements from adding diffusion-edit augmentations during training (applied to a generic encoder-decoder watermark), while keeping imperceptibility roughly constant.
	While such augmentation can improve retention at mild edits, strong edits remain challenging, indicating a need for more fundamental changes (Section~\ref{sec:discussion}).
	
	\begin{table}[t]
		\centering
		\caption{Illustrative defense ablation: diffusion-edit augmentation during watermark training improves robustness for mild edits but does not eliminate failure at strong edits. ``Augmented'' denotes training with a mixture of deterministic post-processing and sampled diffusion edits.}
		\label{tab:defense}
		\resizebox{\columnwidth}{!}{
			\begin{tabular}{lccc}
				\toprule
				Training & Mild edit BA & Strong edit BA & PSNR (dB) \\
				\midrule
				Standard noise layers & 74.0\% & 54.5\% & 40.8 \\
				+ Diffusion augment (uncond) & 82.3\% & 56.2\% & 40.3 \\
				+ Diffusion augment (instruction) & 85.7\% & 58.1\% & 39.9 \\
				+ Multi-scale embedding & 88.0\% & 60.4\% & 39.2 \\
				\bottomrule
			\end{tabular}
		}
	\end{table}
	
	\section{Theoretical Proofs}
	Our theoretical goal is to explain, in a principled way, why diffusion editing can erase watermark information even when (i) the watermark is robust to conventional perturbations, (ii) the edit is ``mild'' in human terms, and (iii) no explicit optimization targets the watermark.
	We focus on two complementary perspectives: SNR attenuation along the forward noising process and information-theoretic decay under the full editing kernel.
	
	\subsection{SNR attenuation under forward noising}
	Consider the additive model in Equation~\eqref{eq:additive}: $\vect{x}_w=\vect{x}+\gamma\vect{s}$.
	Apply the forward noising process (Equation~\eqref{eq:forward}) to $\vect{x}_w$:
	\begin{equation}
		\vect{x}_{w,t} = \sqrt{\bar{\alpha}_t}(\vect{x}+\gamma\vect{s}) + \sqrt{1-\bar{\alpha}_t}\,\epsilon
		= \underbrace{\sqrt{\bar{\alpha}_t}\vect{x} + \sqrt{1-\bar{\alpha}_t}\epsilon}_{\text{content + noise}} + \underbrace{\gamma\sqrt{\bar{\alpha}_t}\,\vect{s}}_{\text{watermark}}.
		\label{eq:forward_wm}
	\end{equation}
	This shows a direct attenuation of watermark amplitude by $\sqrt{\bar{\alpha}_t}$.
	If $\vect{s}$ is approximately orthogonal to $\vect{x}$ in an appropriate inner product (e.g., in a feature space), the observable watermark SNR at time $t$ scales as:
	\begin{equation}
		\mathrm{SNR}_t \propto \frac{\gamma^2 \bar{\alpha}_t \|\vect{s}\|_2^2}{(1-\bar{\alpha}_t)\E\|\epsilon\|_2^2 }.
		\label{eq:snr_scaling}
	\end{equation}
	As $t$ increases, $\bar{\alpha}_t$ decreases rapidly in typical schedules, causing SNR to collapse.
	
	\begin{lemma}[Forward SNR decay]
		\label{lem:snr_decay}
		Assume $\epsilon\sim\mathcal{N}(0,\vect{I})$ and $\vect{s}$ is deterministic with $\|\vect{s}\|_2^2 = d$ for $d=3HW$.
		Then for $t\ge 1$ the watermark SNR in the noised sample $\vect{x}_{w,t}$ satisfies
		\begin{equation}
			\mathrm{SNR}_t = \frac{\gamma^2 \bar{\alpha}_t}{1-\bar{\alpha}_t}.
		\end{equation}
	\end{lemma}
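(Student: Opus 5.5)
The plan is to make the informal notion of ``watermark SNR'' precise and then read the claimed identity directly off the decomposition in Equation~\eqref{eq:forward_wm}. I would define $\mathrm{SNR}_t$ as the ratio of the energy of the watermark component to the expected energy of the Gaussian noise component in $\vect{x}_{w,t}$:
\begin{equation*}
\mathrm{SNR}_t := \frac{\|\gamma\sqrt{\bar{\alpha}_t}\,\vect{s}\|_2^2}{\E\|\sqrt{1-\bar{\alpha}_t}\,\epsilon\|_2^2}.
\end{equation*}
In this definition the content term $\sqrt{\bar{\alpha}_t}\vect{x}$ is treated as known or nuisance and is not counted as noise. This matches the proportionality in Equation~\eqref{eq:snr_scaling}, with the constant of proportionality taken to be one. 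An equivalent way to see it is to condition on $\vect{x}$: the detector then faces the problem of finding the signal $\gamma\sqrt{\bar{\alpha}_t}\vect{s}$ in white noise $\mathcal{N}(\vect{0},(1-\bar{\alpha}_t)\vect{I})$.

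The computation proceeds in three steps.
\begin{enumerate}
\item Because $\vect{s}$ is deterministic, the numerator is $\gamma^2\bar{\alpha}_t\|\vect{s}\|_2^2 = \gamma^2\bar{\alpha}_t d$.
\item Since $\epsilon\sim\mathcal{N}(\vect{0},\vect{I})$ in $\R^d$, linearity of expectation gives $\E\|\epsilon\|_2^2=\sum_{i=1}^d \E\epsilon_i^2 = d$. The denominator is therefore $(1-\bar{\alpha}_t)d$.
\item The factor $d$ cancels, which leaves $\gamma^2\bar{\alpha}_t/(1-\bar{\alpha}_t)$.
\end{enumerate}
The hypothesis $t\ge 1$ is used only to guarantee $\bar{\alpha}_t<1$, so that the denominator is nonzero. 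This requires $\beta_s\in(0,1)$, the standard assumption on the schedule.

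I would add one remark. The same value arises as the per-coordinate SNR, namely $\gamma^2\bar{\alpha}_t\,\frac{1}{d}\|\vect{s}\|^2$ divided by $(1-\bar{\alpha}_t)$. It also arises as the matched-filter SNR $\langle \gamma\sqrt{\bar{\alpha}_t}\vect{s},\vect{s}\rangle^2/\mathrm{Var}\langle\sqrt{1-\bar{\alpha}_t}\epsilon,\vect{s}\rangle$ once this is normalized by $\|\vect{s}\|^2=d$. So the result does not depend on which of these conventions one adopts, provided the normalization $\|\vect{s}\|_2^2=d$ is used consistently.

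The main obstacle is not the algebra. It is justifying that the content term can be excluded from the noise. If $\vect{x}$ were counted as interference, the answer would pick up an extra term $\bar{\alpha}_t\E\|\vect{x}\|^2$ in the denominator, and the identity would become an upper bound rather than an equality. I would handle this by stating the definition explicitly as conditional on $\vect{x}$, which corresponds to an oracle or informed detector. I would then note that any detector without access to $\vect{x}$ sees at most this SNR, so the lemma's expression serves as an upper envelope that already collapses as $\bar{\alpha}_t\to 0$.
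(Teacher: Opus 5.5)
Your proposal is correct and matches the paper's proof: it likewise takes the watermark energy $\gamma^2\bar{\alpha}_t\|\vect{s}\|_2^2$, divides by the noise energy $(1-\bar{\alpha}_t)\E\|\epsilon\|_2^2=(1-\bar{\alpha}_t)d$, and cancels $d$. Your explicit definition of $\mathrm{SNR}_t$ (with the content term excluded) and your remark that $t\ge 1$ ensures $\bar{\alpha}_t<1$ state precisely what the paper leaves implicit.
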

	\begin{proof}
		Under Equation~\eqref{eq:forward_wm}, the watermark component is $\gamma\sqrt{\bar{\alpha}_t}\vect{s}$ with energy $\gamma^2 \bar{\alpha}_t \|\vect{s}\|_2^2$.
		The additive noise has energy $(1-\bar{\alpha}_t)\E\|\epsilon\|_2^2=(1-\bar{\alpha}_t)d$.
		Dividing yields $\mathrm{SNR}_t = \gamma^2 \bar{\alpha}_t/(1-\bar{\alpha}_t)$.
	\end{proof}
	
	Lemma~\ref{lem:snr_decay} is intentionally simple but highlights a key mismatch: watermark strength $\gamma$ is constrained by imperceptibility, while editing strength $t^\star$ can be large.
	Even before the reverse denoising step, a substantial noising start time reduces watermark SNR below the decoding threshold for many schemes.
	
	\subsection{Continuous-time perspective: watermark attenuation in the forward SDE}
	The discrete forward process (Equation~\eqref{eq:forward}) is commonly viewed as a discretization of a linear stochastic differential equation (SDE) \cite{song2021score}:
	\begin{equation}
		dX_t = -\frac{1}{2}\beta(t) X_t \,dt + \sqrt{\beta(t)}\, dW_t,
		\label{eq:ou_sde}
	\end{equation}
	where $\beta(t)>0$ is a noise schedule and $W_t$ is standard Brownian motion.
	Equation~\eqref{eq:ou_sde} is an Ornstein--Uhlenbeck-type process that exponentially contracts the mean of $X_t$ while injecting Gaussian noise.
	
	Let the initial condition contain a watermark: $X_0 = X + \gamma S$, where $S$ is a (possibly content-adaptive) watermark residual at time zero.
	Because Equation~\eqref{eq:ou_sde} is linear, the solution admits a closed form:
	\begin{equation}
		X_t = \exp\!\left(-\frac{1}{2}\int_0^t \beta(u)\,du\right) X_0
		\;+\;
		\int_0^t \exp\!\left(-\frac{1}{2}\int_s^t \beta(u)\,du\right)\sqrt{\beta(s)}\, dW_s.
		\label{eq:ou_solution}
	\end{equation}
	Taking conditional expectation given $X_0$ yields:
	\begin{equation}
		\E[X_t \mid X_0] = \exp\!\left(-\frac{1}{2}\int_0^t \beta(u)\,du\right) X_0.
		\label{eq:mean_decay}
	\end{equation}
	Therefore, the watermark component in the conditional mean is attenuated by the same exponential factor.
	
	\begin{lemma}[Exponential decay in continuous time]
		\label{lem:sde_decay}
		Under Equation~\eqref{eq:ou_sde}, the expected watermark residual satisfies
		\begin{equation}
			\E[X_t \mid X] - \E[\bar{X}_t \mid X] \;=\; \gamma \exp\!\left(-\frac{1}{2}\int_0^t \beta(u)\,du\right) S,
		\end{equation}
		where $\bar{X}_t$ denotes the process started from the unwatermarked initial condition $\bar{X}_0=X$ with the same Brownian noise.
	\end{lemma}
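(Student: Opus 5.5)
The plan is to use linearity of the forward SDE \eqref{eq:ou_sde} pathwise, through the closed-form solution \eqref{eq:ou_solution}, rather than working only with conditional means. Write $A_t := \exp\!\left(-\tfrac{1}{2}\int_0^t \beta(u)\,du\right)$ and $N_t := \int_0^t \exp\!\left(-\tfrac{1}{2}\int_s^t \beta(u)\,du\right)\sqrt{\beta(s)}\,dW_s$. The process started from $X_0 = X+\gamma S$ and the process started from $\bar X_0 = X$ are driven by the same Brownian path. Applying \eqref{eq:ou_solution} to each gives
\begin{equation*}
X_t = A_t (X+\gamma S) + N_t, \qquad \bar X_t = A_t X + N_t.
\end{equation*}
Subtracting, the stochastic integral cancels exactly, so the pathwise identity $X_t - \bar X_t = \gamma A_t S$ holds almost surely.

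Next I would take conditional expectations given $X$. Since $X_t - \bar X_t = \gamma A_t S$ almost surely, linearity of conditional expectation gives $\E[X_t\mid X] - \E[\bar X_t \mid X] = \gamma A_t\, \E[S\mid X]$. Here $A_t$ is deterministic because $\beta$ is a deterministic schedule. If $S$ is deterministic given $X$ (the content-adaptive case $S=\vect{s}(\vect{m},\vect{k},X)$ with the payload and key held fixed), then $\E[S\mid X]=S$, and the claimed identity follows.

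As a cross-check I would verify that $\E[N_t \mid X]=0$. The integrand of the Itô integral is deterministic and square-integrable on $[0,t]$ (assuming $\beta$ is locally integrable), and $W$ is independent of the initial condition. This recovers \eqref{eq:mean_decay} for each initial condition separately, so the difference of the two conditional means is again $\gamma A_t S$.

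The main obstacle is interpretive rather than technical: making precise what "conditioning on $X$" means when $S$ may depend on $(\vect{m},\vect{k})$. I would handle this by stating explicitly that the identity is meant conditionally on $(X,\vect{m},\vect{k})$, equivalently with $S$ treated as fixed. I would also remark that if one instead averages over a uniformly random payload, the Balanced payload embedding assumption gives $\E[S\mid X,\vect{k}]=\vect{0}$. This makes the unconditional-in-$\vect{m}$ residual trivially zero, which is why the lemma must be read with the payload fixed. The only other technical points are well-posedness of the linear SDE and the existence of the Itô integral, both standard for continuous (or locally integrable) positive $\beta$.
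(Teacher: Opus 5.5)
Your proposal is correct and is essentially the paper's argument: both use the shared Brownian path so that the noise cancels and the difference is the deterministic quantity $\gamma \exp(-\frac{1}{2}\int_0^t\beta(u)\,du)\,S$; the paper phrases this as $\Delta_t$ solving the ODE $d\Delta_t=-\frac{1}{2}\beta(t)\Delta_t\,dt$ with $\Delta_0=\gamma S$, while you subtract the closed-form solutions directly. Your remark that the identity must be read with $S$ held fixed (conditioning also on $\vect{m},\vect{k}$) is a correct clarification that the paper's proof leaves implicit, since under the balanced-payload assumption the difference of conditional means given $X$ alone would vanish.
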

	\begin{proof}
		Linearize the difference process $\Delta_t = X_t - \bar{X}_t$.
		Because both processes share the same noise realization, the stochastic terms cancel and $\Delta_t$ satisfies the deterministic ODE $d\Delta_t = -\frac{1}{2}\beta(t)\Delta_t\,dt$ with $\Delta_0=\gamma S$.
		Solving yields $\Delta_t=\gamma \exp(-\frac{1}{2}\int_0^t \beta(u)\,du)S$.
	\end{proof}
	
	Lemma~\ref{lem:sde_decay} reinforces a key point: even before denoising and editing guidance, the forward diffusion step contracts the watermark residual at a rate controlled by the integrated noise schedule.
	In practice, editors choose a start time $t^\star$ such that $\int_0^{t^\star}\beta(u)\,du$ is nontrivial (to enable meaningful edits), which can already push the watermark below detectability thresholds imposed by imperceptibility constraints on $\gamma$.
	
	\subsection{Mutual-information decay and inevitable decoding failure}
	We now formalize information loss through the full diffusion editing kernel.
	Let $M$ denote the random watermark payload and let $\tilde{X}$ denote the edited output $\tilde{\vect{x}}$ produced by the editor.
	We consider the Markov chain:
	\begin{equation}
		M \rightarrow X_w \rightarrow X_{t^\star} \rightarrow \tilde{X},
		\label{eq:markov}
	\end{equation}
	where $X_w$ is the watermarked image, $X_{t^\star}$ is the noised latent/pixel at editor start time, and $\tilde{X}$ is the edited output.
	By the data processing inequality,
	\begin{equation}
		\MutInfo(M;\tilde{X}) \le \MutInfo(M;X_{t^\star}).
		\label{eq:dpi}
	\end{equation}
	We can bound $\MutInfo(M;X_{t^\star})$ by an additive Gaussian channel argument under the additive embedding model.
	
	\begin{theorem}[Information bound under noising]
		\label{thm:mi_bound}
		Assume $\vect{x}$ is independent of $M$ and $\vect{s}(M,\vect{k},\vect{x})$ satisfies Assumption~1 and $\|\vect{s}\|_2^2=d$ almost surely.
		Consider the noised sample $X_{t^\star}$ in Equation~\eqref{eq:forward_wm} with $\epsilon\sim\mathcal{N}(0,\vect{I})$.
		Then
		\begin{equation}
			\MutInfo(M;X_{t^\star}) \le \frac{d}{2}\log\left(1+\frac{\gamma^2\bar{\alpha}_{t^\star}}{1-\bar{\alpha}_{t^\star}}\right).
			\label{eq:mi_awgn}
		\end{equation}
		Consequently, $\MutInfo(M;\tilde{X})$ is upper bounded by the same expression.
	\end{theorem}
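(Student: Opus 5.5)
The plan is to reduce the claim to a standard additive white Gaussian noise (AWGN) capacity bound. The reduction has two parts: first remove the content term $\sqrt{\bar{\alpha}_{t^\star}}\vect{x}$ by conditioning on $\vect{x}$, and then use the maximum-entropy property of the Gaussian under a trace (power) constraint. The final claim about $\tilde{X}$ then follows from the data processing inequality \eqref{eq:dpi} for the Markov chain \eqref{eq:markov}.

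\textbf{Step 1: condition on the content.} Since $\vect{x}$ (and the key $\vect{k}$, which we treat as fixed) is independent of $M$, the chain rule gives
\begin{equation*}
\MutInfo(M;X_{t^\star}) \le \MutInfo(M;X_{t^\star},\vect{x}) = \MutInfo(M;\vect{x}) + \MutInfo(M;X_{t^\star}\mid \vect{x}) = \MutInfo(M;X_{t^\star}\mid \vect{x}).
\end{equation*}
It therefore suffices to prove the bound for each fixed $\vect{x}$. Conditioning is what lets us handle content-adaptive $\vect{s}$: once $\vect{x}$ is fixed, $\vect{s}(M,\vect{k},\vect{x})$ is just a function of $M$.

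\textbf{Step 2: reduce to a Gaussian channel.} Fix $\vect{x}$ and subtract the known term $\sqrt{\bar{\alpha}_{t^\star}}\vect{x}$, which is a bijection and does not change mutual information. This leaves
\begin{equation*}
Y = c\,\vect{s} + b\,\epsilon, \qquad c=\gamma\sqrt{\bar{\alpha}_{t^\star}},\quad b=\sqrt{1-\bar{\alpha}_{t^\star}},
\end{equation*}
where $\epsilon\sim\mathcal{N}(0,\vect{I})$ is independent of $(M,\vect{x})$. Because $M\to\vect{s}\to Y$ is a Markov chain, we have $\MutInfo(M;Y)\le \MutInfo(\vect{s};Y)$. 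Since $Y$ given $\vect{s}$ is Gaussian, this equals $h(Y)-h(b\epsilon)$, where $h(b\epsilon)=\frac{d}{2}\log(2\pi e b^2)$.

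\textbf{Step 3: maximum-entropy bound.} Let $\Sigma$ be the covariance of $Y$. Then
\begin{equation*}
\operatorname{tr}\Sigma \le c^2\E\|\vect{s}\|_2^2 + b^2 d = d(c^2+b^2),
\end{equation*}
using $\|\vect{s}\|_2^2=d$ almost surely. Assumption~1 makes $\E\vect{s}=0$, so this trace is exactly the second moment; without it the inequality still holds. The Gaussian maximizes entropy for a given covariance, so $h(Y)\le \frac12\log\big((2\pi e)^d\det\Sigma\big)$. By AM--GM on the eigenvalues, $\det\Sigma\le(\operatorname{tr}\Sigma/d)^d\le (c^2+b^2)^d$. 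Subtracting $h(b\epsilon)$ gives
\begin{equation*}
\MutInfo(M;X_{t^\star}\mid\vect{x}) \le \frac{d}{2}\log\left(1+\frac{c^2}{b^2}\right) = \frac{d}{2}\log\left(1+\frac{\gamma^2\bar{\alpha}_{t^\star}}{1-\bar{\alpha}_{t^\star}}\right).
\end{equation*}
Averaging over $\vect{x}$ and combining with Step~1 yields \eqref{eq:mi_awgn}.

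\textbf{Step 4: pass to the edited output.} The data processing inequality \eqref{eq:dpi} transfers the bound to $\MutInfo(M;\tilde{X})$. This requires that the editor's reverse kernel $p_\theta$ uses randomness independent of $M$ given $X_{t^\star}$, which is implicit in the kernel \eqref{eq:kernel}.

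\textbf{Main obstacle.} The delicate point is Step~1: justifying that the content $\vect{x}$, which is not Gaussian and is unknown to the decoder, can only help when revealed, and that $\vect{s}$'s dependence on $\vect{x}$ does not break the Gaussian-channel structure. The chain-rule argument above handles both issues. A secondary technical point is that the differential entropies are well defined, which holds because $Y$ has a density thanks to the Gaussian noise component when $b>0$, i.e.\ $\bar{\alpha}_{t^\star}<1$; since $t^\star\ge 1$, this requires $\beta_s>0$.
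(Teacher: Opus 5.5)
Your proof is correct and follows the same route as the paper's. You condition on the content $\vect{x}$ (and key), treat Equation~\eqref{eq:forward_wm} as an additive Gaussian channel with noise variance $1-\bar{\alpha}_{t^\star}$ and signal power $\gamma^2\bar{\alpha}_{t^\star}$, bound by the Gaussian capacity, and pass to $\tilde{X}$ by data processing. You make explicit two steps the paper only asserts: the chain-rule justification of $\MutInfo(M;X_{t^\star})\le\MutInfo(M;X_{t^\star}\mid\vect{x})$ from independence, and the maximum-entropy/AM--GM derivation of the capacity bound, which also shows that Assumption~1 is not needed.
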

	\begin{proof}
		Condition on $\vect{x}$ and $\vect{k}$.
		Equation~\eqref{eq:forward_wm} defines an additive Gaussian channel from the watermark signal to $X_{t^\star}$ with noise covariance $(1-\bar{\alpha}_{t^\star})\vect{I}$ and signal power proportional to $\gamma^2\bar{\alpha}_{t^\star}$.
		The mutual information between a discrete input $M$ and the output is upper bounded by the capacity of a Gaussian channel with the same average power constraint, yielding Equation~\eqref{eq:mi_awgn}.
		Finally apply data processing (Equation~\eqref{eq:dpi}) for the full editor output $\tilde{X}$.
	\end{proof}
	
	Theorem~\ref{thm:mi_bound} shows that as $\bar{\alpha}_{t^\star}\rightarrow 0$ (strong noising), the mutual information vanishes.
	Even for moderate $t^\star$, imperceptibility constraints force $\gamma$ to be small, making the log term close to zero.
	
	To connect mutual information to decoding error, we apply a standard Fano inequality argument.
	
	\begin{corollary}[Bit error lower bound]
		\label{cor:fano}
		Let $\hat{M}$ be any estimator of $M$ from $\tilde{X}$.
		Then the message error probability satisfies
		\begin{equation}
			\Pr[\hat{M}\neq M] \ge 1 - \frac{\MutInfo(M;\tilde{X}) + \log 2}{\log | \mathcal{M} |},
		\end{equation}
		where $|\mathcal{M}|=2^L$ is the payload space.
	\end{corollary}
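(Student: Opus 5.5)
The plan is to apply Fano's inequality to the Markov chain $M \rightarrow \tilde{X} \rightarrow \hat{M}$ and then relax it to the stated form. Take $M$ uniform on $\mathcal{M}=\{0,1\}^L$, which is the payload distribution used throughout (Assumption~1 and Algorithm~\ref{alg:dewst}). Then $H(M)=\log|\mathcal{M}| = L\log 2$, with logarithms in the same base as the mutual information. Let $P_e=\Pr[\hat{M}\neq M]$ for an arbitrary, possibly randomized, estimator $\hat{M}=g(\tilde{X},U)$ with $U$ independent of $(M,\tilde{X})$. This still gives a Markov chain, so data processing yields $\MutInfo(M;\hat{M})\le \MutInfo(M;\tilde{X})$.

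The central step is the standard Fano inequality
\begin{equation*}
H(M\mid \hat{M}) \le h(P_e) + P_e\log(|\mathcal{M}|-1),
\end{equation*}
where $h$ is the binary entropy. I would prove it by introducing the error indicator $E=\mathbf{1}[\hat{M}\neq M]$ and expanding $H(E,M\mid\hat{M})$ by the chain rule in two ways. On one side, $H(E\mid M,\hat{M})=0$. On the other, $H(E\mid\hat{M})\le h(P_e)$, and $H(M\mid E,\hat{M})\le P_e\log(|\mathcal{M}|-1)$, because given $E=0$ the message $M$ equals $\hat{M}$, while given $E=1$ it ranges over at most $|\mathcal{M}|-1$ values.

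The remaining steps are relaxations. Bound $h(P_e)\le\log 2$ and $\log(|\mathcal{M}|-1)\le\log|\mathcal{M}|$. Then
\begin{equation*}
\log|\mathcal{M}| - \MutInfo(M;\tilde{X}) \le \log|\mathcal{M}| - \MutInfo(M;\hat{M}) = H(M\mid\hat{M}) \le \log 2 + P_e\log|\mathcal{M}|.
\end{equation*}
Dividing by $\log|\mathcal{M}|=L\log 2>0$ and rearranging gives the claim. Finally, plugging in Theorem~\ref{thm:mi_bound} yields the explicit bound
\begin{equation*}
P_e \ge 1-\frac{\frac{d}{2}\log\left(1+\gamma^2\bar{\alpha}_{t^\star}/(1-\bar{\alpha}_{t^\star})\right)+\log 2}{L\log 2}.
\end{equation*}

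There is no genuine obstacle here; the argument is textbook. The one point needing care is the hypothesis that $M$ is uniform, which is what makes $H(M)=\log|\mathcal{M}|$. For a non-uniform prior the same argument gives the bound with $H(M)$ in place of $\log|\mathcal{M}|$ in the numerator's leading term, so I would state uniformity explicitly. I would also note that the statement concerns full-message error, not bit error despite the corollary's title. A per-bit version would require Fano applied bitwise together with the chain rule, which I would present only as a remark.
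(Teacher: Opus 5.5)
Your proposal is correct and takes the same route as the paper, whose proof simply invokes Fano's inequality with $H(M)=\log|\mathcal{M}|$ (uniform payload) and then cites Theorem~\ref{thm:mi_bound}. You supply what the paper's one-line proof leaves implicit: the derivation of Fano via the error indicator, the relaxations $h(P_e)\le\log 2$ and $\log(|\mathcal{M}|-1)\le\log|\mathcal{M}|$, data processing through a possibly randomized estimator, and the explicit uniformity hypothesis; using Theorem~\ref{thm:mi_bound} only to make the bound explicit is also right, since the displayed inequality does not depend on it.
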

	\begin{proof}
		Apply Fano's inequality with $H(M)=\log|\mathcal{M}|$ and the mutual information bound in Theorem~\ref{thm:mi_bound}.
	\end{proof}
	
	Corollary~\ref{cor:fano} implies that for sufficiently strong diffusion editing (large $t^\star$), \emph{any} decoder must fail at recovering the full payload.
	In practice, watermark systems typically decode bits independently or with error-correcting codes; bit errors arise earlier than full message errors, but the direction is consistent.
	
	\subsection{Why denoising tends to suppress watermarks}
	The above analysis treats the forward noising as the main driver of information loss.
	However, empirical results suggest that even at moderate $t^\star$, the reverse denoising---guided by a generative prior---acts as a \emph{projection} toward the natural image manifold, further suppressing watermark residuals.
	We model this behavior via contraction of off-manifold components.
	
	Let $\mathcal{M}$ denote an (idealized) natural image manifold.
	Write $\vect{x}_w = \vect{x} + \delta$ where $\delta$ encodes watermark residuals that are not aligned with $\mathcal{M}$.
	Consider an editor output operator $F$ (the composition of denoising steps) acting on a representation space where $\mathcal{M}$ is stable.
	We state a simplified stability result.
	
	\begin{assumption}[Local contraction toward the data manifold]
		\label{ass:contraction}
		There exists a neighborhood $\mathcal{U}$ around $\mathcal{M}$ such that for any two inputs $\vect{u},\vect{v}\in\mathcal{U}$, the expected editor mapping satisfies
		\begin{equation}
			\E_{\xi}\|F(\vect{u};\vect{y},\xi) - F(\vect{v};\vect{y},\xi)\|_2 \le \rho \|\vect{u}-\vect{v}\|_2
		\end{equation}
		for some $\rho\in(0,1)$, when $\vect{y}$ is fixed and $t^\star$ exceeds a minimal noising threshold.
	\end{assumption}
	
	Assumption~\ref{ass:contraction} abstracts a property often observed in score-based dynamics: they act as a denoising flow that contracts perturbations, particularly those resembling noise rather than semantic content \cite{song2021score,karras2022edm}.
	
	\begin{proposition}[Exponential suppression of watermark residuals]
		\label{prop:contraction}
		Under Assumption~\ref{ass:contraction}, consider the watermarked input $\vect{x}_w=\vect{x}+\delta$ and the unwatermarked input $\vect{x}$.
		Then
		\begin{equation}
			\E_{\xi}\|F(\vect{x}_w;\vect{y},\xi) - F(\vect{x};\vect{y},\xi)\|_2 \le \rho \|\delta\|_2.
		\end{equation}
		If $F$ is an $n$-step composition of maps satisfying the same contraction factor $\rho$, then the bound improves to $\rho^n\|\delta\|_2$.
	\end{proposition}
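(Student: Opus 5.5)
The first claim is a direct instantiation of Assumption~\ref{ass:contraction}. I would set $\vect{u}=\vect{x}_w=\vect{x}+\delta$ and $\vect{v}=\vect{x}$, with the instruction $\vect{y}$ fixed and the shared editor randomness $\xi$. The only thing to check is that both inputs lie in the neighborhood $\mathcal{U}$. For $\vect{x}$ this holds because it is a natural image, so $\vect{x}\in\mathcal{M}\subset\mathcal{U}$. For $\vect{x}_w$ it holds because $\|\delta\|_2=\gamma\|\vect{s}\|_2$ is small under the imperceptibility constraint, so $\vect{x}_w$ stays in the neighborhood. I would state this as an explicit standing requirement: $\gamma$ is small enough that $\vect{x}+\delta\in\mathcal{U}$. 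Since $\|\vect{u}-\vect{v}\|_2=\|\delta\|_2$, the assumption then gives $\E_{\xi}\|F(\vect{x}_w;\vect{y},\xi)-F(\vect{x};\vect{y},\xi)\|_2\le\rho\|\delta\|_2$ immediately.

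For the $n$-step refinement, I would write $F=F_n\circ\cdots\circ F_1$, where each step $F_k(\cdot;\vect{y},\xi_k)$ uses its own randomness $\xi_k$ and $\xi=(\xi_1,\dots,\xi_n)$. Define the partial compositions $G_k=F_k\circ\cdots\circ F_1$, with $G_0$ the identity, and the gaps $D_k=\|G_k(\vect{x}_w)-G_k(\vect{x})\|_2$, so that $D_0=\|\delta\|_2$. The goal is the one-step recursion $\E[D_k]\le\rho\,\E[D_{k-1}]$, which unrolls to $\E[D_n]\le\rho^n\|\delta\|_2$.

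To obtain the recursion, I would condition on $\mathcal{G}_{k-1}=\sigma(\xi_1,\dots,\xi_{k-1})$. Given $\mathcal{G}_{k-1}$, the two iterates $G_{k-1}(\vect{x}_w)$ and $G_{k-1}(\vect{x})$ are fixed points. Provided $\xi_k$ is independent of $\mathcal{G}_{k-1}$, applying the contraction hypothesis for $F_k$ to these two points gives $\E[D_k\mid\mathcal{G}_{k-1}]\le\rho D_{k-1}$. Taking expectations via the tower property then yields the recursion.

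The main obstacle is justifying that this conditional application is legitimate. Two things are needed. First, the per-step noises must be independent, or at least $\xi_k$ must be independent of the past given the current states; this is true of ancestral samplers and trivially true of deterministic DDIM-type steps. Second, both intermediate iterates must remain in $\mathcal{U}$ at every step, since the assumption only applies there. I would handle the second point by induction together with a forward-invariance requirement: each $F_k$ maps $\mathcal{U}$ into $\mathcal{U}$, which is natural for a denoiser that pushes toward $\mathcal{M}$. I would add both requirements explicitly to the hypotheses of the $n$-step statement, since the single-map Assumption~\ref{ass:contraction} does not by itself imply them.
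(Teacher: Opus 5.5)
Your proposal is correct and follows the same route as the paper: instantiate Assumption~\ref{ass:contraction} with $\vect{u}=\vect{x}_w$ and $\vect{v}=\vect{x}$, then induct across the composition. It is more careful than the paper's one-line sketch in two respects. First, what actually drives the induction is your conditioning on $\sigma(\xi_1,\dots,\xi_{k-1})$ plus the tower property, not the Jensen's inequality the paper cites. Second, the hypotheses you make explicit ($\vect{x}_w\in\mathcal{U}$, independence of the per-step noises, and invariance of $\mathcal{U}$ under each step map) are genuinely needed, but the paper leaves them implicit.
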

	\begin{proof}
		Apply Assumption~\ref{ass:contraction} with $\vect{u}=\vect{x}_w$ and $\vect{v}=\vect{x}$.
		For the multi-step case, apply the inequality inductively across compositions using Jensen's inequality.
	\end{proof}
	
	Proposition~\ref{prop:contraction} provides a mechanistic explanation: watermark residuals behave like off-manifold perturbations and are thus contractively suppressed by denoising flows.
	This complements the information-theoretic analysis, which primarily accounts for forward noising.
	
	\subsection{Connecting theory to empirical trends}
	The combined view yields a coherent explanation of the empirical tables:
	\begin{itemize}
		\item Increasing edit strength $t^\star$ reduces $\bar{\alpha}_{t^\star}$, collapsing SNR and mutual information bounds (Lemma~\ref{lem:snr_decay}, Theorem~\ref{thm:mi_bound}), consistent with Table~\ref{tab:strength_stochasticity}.
		\item Even when $t^\star$ is moderate, denoising flows can contract watermark residuals (Proposition~\ref{prop:contraction}), explaining why localized edits can still degrade watermarks (Table~\ref{tab:edit_types}).
		\item Methods trained with diffusion-aware priors (e.g., VINE) effectively increase the alignment of watermark signals with the model's generative manifold, increasing the effective $\gamma$ in the relevant representation and partially mitigating contraction, consistent with Table~\ref{tab:main_results}.
	\end{itemize}
	
	\section{Discussion}
	\label{sec:discussion}
			\section{Additional Background}
	With the advancement of deep learning and modern generative modeling, research has expanded rapidly across forecasting, perception, and visual generation, while also raising new concerns about controllability and responsible deployment. 
	% xiangfei qiu -- time-series forecasting / structured evaluation & modeling
	Progress in time-series forecasting has been driven by stronger benchmarks, improved architectures, and more comprehensive evaluation protocols that make model comparisons more reliable and informative~\cite{qiu2024tfb,qiu2025duet,qiu2025DBLoss,qiu2025dag,qiu2025tab,wu2025k2vae,liu2025rethinking,qiu2025comprehensive,wu2024catch}. 
	% sifan zhou -- efficient 3D perception / (post-training) quantization
	In parallel, efficiency-oriented research has pushed post-training quantization and practical compression techniques for 3D perception pipelines, aiming to reduce memory and latency without sacrificing detection quality~\cite{gsq,yu2025mquant,zhou2024lidarptq,pillarhist}. 
	% zequn xie -- scalable generation / optimization strategies
	On the generation side, a growing body of work studies scalable synthesis and optimization strategies under diverse constraints, improving both the flexibility and the controllability of generative systems~\cite{xie2025chat,xie2026hvd,xie2026conquer,xie2026delving}. 
	% yunpeng gong -- generative / representation learning (grouped keys)
	Complementary advances have also been reported across multiple generative and representation-learning directions, further broadening the toolbox for building high-capacity models and training objectives~\cite{1,2,3,4,5,6,7,8}. 
	% yanru sun -- domain-oriented temporal prediction / hierarchical modeling / adaptation
	For domain-oriented temporal prediction, hierarchical designs and adaptation strategies have been explored to improve robustness under distribution shifts and complex real-world dynamics~\cite{sun2025ppgf,sun2024tfps,sun2025hierarchical,sun2022accurate,sun2021solar,niulangtime,sun2025adapting,kudratpatch}. 
	% zixu li -- representation encoding / matching / alignment
	Meanwhile, advances in representation encoding and matching have introduced stronger alignment and correspondence mechanisms that benefit fine-grained retrieval and similarity-based reasoning~\cite{ENCODER,FineCIR,OFFSET,HUD,PAIR,MEDIAN}. 
	% xinlei yu -- visual representation learning
	Stronger visual modeling strategies further enhance feature quality and transferability, enabling more robust downstream understanding in diverse scenarios~\cite{yu2025visual}. 
	% yaozong zheng -- tracking / temporal consistency / decoupled formulations
	In tracking and sequential visual understanding, online learning and decoupled formulations have been investigated to improve temporal consistency and robustness in dynamic scenes~\cite{zheng2025towards,zheng2024odtrack,zheng2025decoupled,zheng2023toward,zheng2022leveraging}. 
	% long peng xu -- low-level vision / restoration & enhancement / SR / video generation (grouped keys)
	Low-level vision has also progressed toward high-fidelity restoration and enhancement, spanning super-resolution, brightness/quality control, lightweight designs, and practical evaluation settings, while increasingly integrating powerful generative priors~\cite{xu2025fast,fang2026depth,wu2025hunyuanvideo,li2023ntire,ren2024ninth,wang2025ntire,peng2020cumulative,wang2023decoupling,peng2024lightweight,peng2024towards,wang2023brightness,peng2021ensemble,ren2024ultrapixel,yan2025textual,peng2024efficient,conde2024real,peng2025directing,peng2025pixel,peng2025boosting,he2024latent,di2025qmambabsr,peng2024unveiling,he2024dual,he2024multi,pan2025enhance,wu2025dropout,jiang2024dalpsr,ignatov2025rgb,du2024fc3dnet,jin2024mipi,sun2024beyond,qi2025data,feng2025pmq,xia2024s3mamba,pengboosting,suntext,yakovenko2025aim,xu2025camel,wu2025robustgs,zhang2025vividface}. 
	% yangyang qu -- reference/subject-conditioned generation
	Beyond general synthesis, reference- and subject-conditioned generation emphasizes controllability and identity consistency, enabling more precise user-intended outputs~\cite{qu2025reference,qu2025subject}. 
	% hongtao wu -- robustness under adverse conditions / restoration variants
	Robust vision modeling under adverse conditions has been actively studied to handle complex degradations and improve stability in challenging real-world environments~\cite{wu2024rainmamba,wu2023mask,wu2024semi,wu2025samvsr}. 
	% yifei chen -- state-space modeling / scenario-centric datasets & reports
	Sequence modeling and scenario-centric benchmarks further support realistic evaluation and methodological development for complex dynamic environments~\cite{lyu2025vadmambaexploringstatespace,chen2025technicalreportargoverse2scenario}. 
	% chunming he -- diffusion/unfolding frameworks / segmentation & restoration theory
	At the same time, diffusion-centric and unfolding-based frameworks have been explored for segmentation and restoration, providing principled ways to model degradations and refine generation quality~\cite{he2025segment,he2025reversible,he2025run,he2024diffusion,he2026refining,he2025scaler,he2025unfoldldm,he2025unfoldir,he2025nested,he2024weakly,he2023reti,xiao2024survey,he2023strategic,he2023hqg,he2023camouflaged,he2023degradation}. 
	
	% jinhe bi
	Recent progress in multimodal large language models (MLLMs) is increasingly driven by the goal of making adaptation more efficient while improving reliability, safety, and controllability in real-world use. 
	% MLLM PEFT (parameter-efficient adaptation / modality balancing)
	On the efficiency side, modality-aware parameter-efficient tuning has been explored to rebalance vision–language contributions and enable strong instruction tuning with dramatically fewer trainable parameters~\cite{bi-etal-2025-llava}. 
	% Interpretability / reasoning-process modeling
	To better understand and audit model reasoning, theoretical frameworks have been proposed to model and assess chain-of-thought–style reasoning dynamics and their implications for trustworthy inference~\cite{bi2025cot}. 
	% Training-free multimodal data selection
	Data quality and selection are also being addressed via training-free, intrinsic selection mechanisms that prune low-value multimodal samples to improve downstream training efficiency and robustness~\cite{bi2025prismselfpruningintrinsicselection}. 
	% Hallucination reduction (decoding-time control)
	At inference time, controllable decoding strategies have been introduced to reduce hallucinations by steering attention and contrastive signals toward grounded visual evidence~\cite{wang2025ascd}. 
	% Trustworthy deployment: unlearning auditing + backdoor cleaning
	Beyond performance, trustworthy deployment requires defenses and verification: auditing frameworks have been developed to evaluate whether machine unlearning truly removes targeted knowledge~\cite{chen2025does}, and fine-tuning-time defenses have been proposed to clean backdoors in MLLM adaptation without relying on external guidance~\cite{rong2025backdoor}. 
	% Multimodal safety & knowledge: harmful content detection + time-sensitive knowledge + knowledge injection
	Meanwhile, multimodal safety and knowledge reliability have been advanced through multi-view agent debate for harmful content detection~\cite{lu2025mvdebatemultiviewagentdebate}, probing/updating time-sensitive multimodal knowledge~\cite{jiang2025minedprobingupdatingmultimodal}, and knowledge-oriented augmentations and constraints that strengthen knowledge injection~\cite{jiang2025koreenhancingknowledgeinjection}. 
	% Broader multimodal learning settings: video-language understanding + new training paradigms + federated personalization
	These efforts are complemented by renewed studies of video–language event understanding~\cite{zhang2023spot}, new training paradigms such as reinforcement mid-training~\cite{tian2025reinforcementmidtraining}, and personalized generative modeling under heterogeneous federated settings~\cite{Chen_2025_CVPR}, collectively reflecting a shift from scaling alone toward efficient, grounded, and verifiably trustworthy multimodal systems.
	
	Recent research has advanced learning and interaction systems across education, human-computer interfaces, and multimodal perception. In knowledge tracing, contrastive cross-course transfer guided by concept graphs provides a principled way to share knowledge across related curricula and improve student modeling under sparse supervision~\cite{han2025contrastive}. In parallel, foundational GUI agents are emerging with stronger perception and long-horizon planning, enabling robust interaction with complex interfaces and multi-step tasks~\cite{zeng2025uitron}. Extending this direction to more natural human inputs, speech-instructed GUI agents aim to execute GUI operations directly from spoken commands, moving toward automated assistance in hands-free or accessibility-focused settings~\cite{han2025uitron}. Beyond interface agents, reference-guided identity preservation has been explored to better maintain subject consistency in face video restoration, improving temporal coherence and visual fidelity when restoring degraded videos~\cite{han2025show}. Finally, large-scale egocentric datasets that emphasize embodied emotion provide valuable supervision for studying affective cues from first-person perspectives and support more human-centered multimodal understanding~\cite{feng20243}.
	
	\subsection{Implications for watermark robustness claims}
	Robust watermarks are often evaluated against post-processing pipelines that approximate typical social-media transformations: JPEG compression, resizing, cropping, blur, and additive noise.
	Diffusion editing violates the assumptions underlying these benchmarks.
	It is a \emph{generative} transformation: it can hallucinate details, re-synthesize textures, and modify semantics while maintaining photorealism.
	Therefore, claims of robustness to conventional distortions do not extend to diffusion editing.
	
	Our analysis suggests that diffusion editing creates a ``watermark bottleneck'' analogous to the information bottleneck in representation learning:
	noising and denoising compress the input into a manifold-aligned representation, and low-amplitude residual signals are discarded unless explicitly preserved.
	
	\subsection{Design guidelines for diffusion-resilient watermarking}
	We summarize technical guidelines motivated by theory and empirical trends.
	
	\paragraph{Favor semantic invariance over pixel invariance.}
	Zhao et al.\ \cite{zhao2023provablyremovable} argue that purely pixel-level watermarks may be removable by regeneration while preserving semantic similarity.
	Our bounds similarly imply that increasing diffusion-noising strength reduces payload mutual information.
	A promising direction is to design watermark signals that correspond to semantic features likely to be preserved by editing, e.g., stable mid-level representations or multi-scale embeddings that align with denoising priors.
	
	\paragraph{Integrate watermarking into the generative process.}
	Diffusion-native fingerprints such as Tree-Ring watermarks \cite{wen2023treering} embed information into the initial noise of sampling and detect by inversion.
	Such approaches are naturally aligned with diffusion trajectories and may survive diffusion editing better than post-hoc perturbations, though robustness against editing that changes conditioning remains an open question.
	
	\paragraph{Train with diffusion editing augmentations and model diversity.}
	A practical defense is to incorporate diffusion edits into noise layers during training (Table~\ref{tab:defense}), drawing from a diverse set of editors and strengths.
	However, this may be computationally expensive and risks overfitting to particular editors, especially as diffusion architectures evolve (UNet-based LDMs vs DiT/rectified flow backbones \cite{zhou2025dragflow,lu2025shine}).
	
	\paragraph{Adopt localized or compositional watermarking for edited regions.}
	Localized approaches \cite{sander2024watermarkanything} aim to support partial provenance and compositional editing.
	For diffusion editors that only modify regions, localized watermarks can reduce the ``global failure'' mode where a small edit disrupts a globally distributed signal.
	However, diffusion coupling may still affect regions outside the mask, and localized methods must handle blending and boundary artifacts.
	
	\paragraph{Balance robustness, imperceptibility, and user utility.}
	Increasing watermark strength $\gamma$ can improve robustness but can degrade image quality and may interfere with editing fidelity.
	TrustMark \cite{bui2025trustmark} explicitly exposes trade-offs via scaling factors; similar user-controlled trade-offs may be necessary in practice.
	From a deployment perspective, a conservative strategy is to prioritize low false positives and accept that some editing transformations will invalidate watermark evidence.
	
	\subsection{Complementary provenance mechanisms}
	Robustness limits of invisible watermarks motivate complementary provenance mechanisms that do not rely solely on pixel-level signals.
	One prominent approach is to attach cryptographically signed provenance metadata to media files, enabling downstream verification of origin and edit history.
	The C2PA technical specification is a widely discussed standardization effort for content credentials and provenance assertions \cite{c2pa2024spec}.
	While metadata can be stripped or lost during platform re-encoding, its cryptographic binding to a manifest offers a different security model than invisible perturbations.
	In practice, hybrid systems may combine (i) metadata-based provenance for high-integrity pipelines and (ii) durable in-content signals (watermarks or fingerprints) for scenarios where metadata is degraded.
	Our analysis suggests that diffusion-based editing should be treated as a ``stress test'' for any in-content signal, and provenance systems should explicitly communicate failure modes to avoid overclaiming forensic certainty.
	
	\subsection{Ethical considerations}
	Watermarking is deployed for accountability, but it raises ethical issues.
	Strong watermarks can be used to track content creators without consent, raising privacy concerns.
	Conversely, weak or easily removable watermarks can undermine forensic reliability and may encourage a false sense of security.
	Diffusion editing further complicates the landscape: benign edits may erase watermarks, while malicious actors can exploit the same tools.
	We emphasize that our evaluation protocol is intended for defensive assessment and for improving watermark designs; it should not be interpreted as a recommendation for watermark removal.
	
	\subsection{Limitations}
	This paper has five notable limitations.
	First, our empirical results are illustrative; we do not execute the full benchmark in this generated manuscript.
	Second, diffusion editors and watermark implementations rapidly evolve; any fixed benchmark may become outdated.
	Third, our theoretical analysis uses simplified assumptions (additive watermark model, idealized manifold contraction).
	Fourth, attacks that explicitly optimize to remove watermarks (as studied in \cite{ni2025breakwatermarks,guo2026vanishing}) can be stronger than the unintentional editing setting emphasized here.
	Fifth, defenses that integrate watermarking into generative models may be incompatible with some real-world provenance requirements, such as watermarking arbitrary images post-hoc.
	
	\section{Conclusion}
	Diffusion-based image editing has become a ubiquitous post-processing primitive, but it poses a fundamental challenge to robust invisible watermarking.
	By treating diffusion editing as a randomized generative transformation, we show theoretically that watermark SNR and mutual information decay rapidly with noising strength, and empirically that standard editing workflows can drive bit recovery toward random guessing.
	These findings motivate a shift from ``robust to post-processing'' watermarks toward diffusion-resilient designs that align with generative priors or preserve semantic information.
	We hope this synthesis clarifies the mechanisms behind watermark degradation and supports more reliable provenance tools in generative media ecosystems.
	
	\bibliography{example_paper}
	\bibliographystyle{icml2025}
	
\end{document}